\definecolor{shadecolor}{rgb}{0.9,0.9,0.9}
\definecolor{mylightgray}{RGB}{220,220,220}
\definecolor{myblue}{RGB}{0, 68, 116}
\definecolor{mycyan}{RGB}{0, 97, 91}
\definecolor{mygreen}{RGB}{2, 102, 1}
\definecolor{myorange}{RGB}{240, 102, 0}
\definecolor{myred}{RGB}{172, 23, 0}
\definecolor{mymagenta}{RGB}{140,16,73}
\newcommand{\DefineColoredTheoremStyle}[2]{%
  \newtheoremstyle{#1}% name
    {3pt}{3pt}% space above, below
    {\itshape}% body font
    {}% indent amount
    {\bfseries\color{#2}}% head font (uses color parameter)
    {\;}% punctuation after head
    { }% space after head
    {\thmname{##1}\thmnumber{ ##2}\textnormal{\thmnote{ (##3)}}}% head spec (note doubled #'s)
}
\theoremstyle{lemmstyle}\newtheorem{lemma}{Lemma}[section]
\theoremstyle{propstyle}\newtheorem{prop}{Proposition}[section]
\theoremstyle{thmstyle}\newtheorem{theorem}{Theorem}[section]
\theoremstyle{corostyle}\newtheorem{corollary}{Corollary}[section]
\theoremstyle{conjstyle}\newtheorem{conjecture}{Conjecture}[section]
\theoremstyle{problstyle}\newtheorem{problem}{Problem}[section]
\theoremstyle{plainstyle}\newtheorem{definition}{Definition}[section]
\theoremstyle{plainstyle}\newtheorem{assumption}{Assumption}[section]
\theoremstyle{plainstyle}\newtheorem{example}{Example}[section]
\theoremstyle{plainstyle}\newtheorem{remark}{Remark}[section]
\newenvironment{boxdefinition}{%
    \begin{tcolorbox}[commonstyle={black}]%
        \begin{definition}%
}{%
        \end{definition}%
    \end{tcolorbox}%
}
\newenvironment{boxassumption}{%
    \begin{tcolorbox}[commonstyle={mylightgray}]%
        \begin{assumption}%
}{%
    \end{assumption}%
  \end{tcolorbox}%
}
\newenvironment{boxlemma}{%
  \begin{tcolorbox}[commonstyle={myblue}]%
    \begin{lemma}%
}{%
    \end{lemma}%
  \end{tcolorbox}%
}
\newenvironment{boxtheorem}{%
  \begin{tcolorbox}[commonstyle={myorange}]%
    \begin{theorem}%
}{%
    \end{theorem}%
  \end{tcolorbox}%
}
\newenvironment{boxexample}{%
  \begin{tcolorbox}[commonstyle={mycyan}]%
    \begin{example}%
}{%
    \end{example}%
  \end{tcolorbox}%
}
\newcommand{\nc}{\newcommand}
\nc{\rnc}{\renewcommand}
\nc{\bra}[1]{\langle#1|}
\nc{\ket}[1]{|#1\rangle}
\nc{\ketbra}[2]{|#1\rangle\!\langle#2|}
\nc{\braket}[2]{\langle#1|#2\rangle}
\nc{\braandket}[3]{\langle #1|#2|#3\rangle}
\nc{\proj}[1]{| #1\rangle\!\langle #1 |}
\nc{\avg}[1]{\langle#1\rangle}
\nc{\rank}{\operatorname{Rank}}
\nc{\id}{{\operatorname{id}}}
\nc{\supp}{{\operatorname{supp}}}
\nc{\smfrac}[2]{\mbox{$\frac{#1}{#2}$}}
\nc{\tr}{\operatorname{Tr}}
\nc{\ox}{\otimes}
\nc{\floor}[1]{\lfloor #1 \rfloor}
\nc{\trans}{\mathsf T}
\nc{\img}{\mathbf{i}}
\nc{\cA}{{\cal A}}
\nc{\cB}{{\cal B}}
\nc{\cC}{{\cal C}}
\nc{\cD}{{\cal D}}
\nc{\cE}{{\cal E}}
\nc{\cF}{{\cal F}}
\nc{\cG}{{\cal G}}
\nc{\cH}{{\cal H}}
\nc{\cI}{{\cal I}}
\nc{\cJ}{{\cal J}}
\nc{\cK}{{\cal K}}
\nc{\cL}{{\cal L}}
\nc{\cM}{{\cal M}}
\nc{\cN}{{\cal N}}
\nc{\cO}{{\cal O}}
\nc{\cP}{{\cal P}}
\nc{\cQ}{{\cal Q}}
\nc{\cR}{{\cal R}}
\nc{\cS}{{\cal S}}
\nc{\cT}{{\cal T}}
\nc{\cV}{{\cal V}}
\nc{\cU}{{\cal U}}
\nc{\cX}{{\cal X}}
\nc{\cY}{{\cal Y}}
\nc{\cZ}{{\cal Z}}
\nc{\cW}{{\cal W}}
\nc{\RR}{{{\mathbb R}}}
\nc{\CC}{{{\mathbb C}}}
\nc{\FF}{{{\mathbb F}}}
\nc{\NN}{{{\mathbb N}}}
\nc{\ZZ}{{{\mathbb Z}}}
\nc{\QQ}{{{\mathbb Q}}}
\nc{\UU}{{{\mathbb U}}}
\nc{\EE}{{{\mathbb E}}}
\nc{\bH}{{\mathfrak{H}}}
\nc{\sK}{{{\mathscr{K}}}}
\nc{\sS}{{{\mathscr{S}}}}
\nc{\sT}{{{\mathscr{T}}}}
\nc{\sA}{{{\mathscr{A}}}}
\nc{\sB}{{{\mathscr{B}}}}
\nc{\sC}{{{\mathscr{C}}}}
\nc{\sE}{{{\mathscr{E}}}}
\nc{\sL}{{{\mathscr{L}}}}
\nc{\sG}{{{\mathscr{G}}}}
\nc{\sF}{{{\mathscr{F}}}}
\nc{\sI}{{{\mathscr{I}}}}
\nc{\sN}{{{\mathscr{N}}}}
\nc{\sM}{{{\mathscr{M}}}}
\nc{\bsC}{\pmb{\mathscr{C}}}
\nc{\Choi}{Choi-Jamio\l{}kowski }
\nc{\reg}{\infty}
\nc{\amo}{\text{\rm amo}}
\nc{\Renyi}{R\'{e}nyi }
\nc{\conv}{\operatorname{conv}}
\nc{\cvxset}{\mathscr{C}}
\nc{\RM}{{{\mathscr{R}}}}
\nc{\END}{\operatorname{End}}
\nc{\PERM}{\mathfrak{\sigma}}
\nc{\Cone}{\text{\rm Cone}}
\nc{\sep}{{\SEP}}
\nc{\DD}{{{\mathbb D}}}
\nc{\BS}{{\scriptscriptstyle \rm {BS}}}
\nc{\Sand}{{\scriptscriptstyle  \rm S}}
\nc{\Petz}{{\scriptscriptstyle  \rm P}}
\nc{\Hypo}{{\scriptscriptstyle  \rm H}}
\nc{\Meas}{{\scriptscriptstyle \rm M}}
\nc{\Proj}{{{\scriptscriptstyle \rm P}}}
\nc{\suchthat}{\text{\rm s.t.}}
\nc{\pl}{{\scalebox{0.7}{+}}}
\nc{\HERM}{\mathscr{H}}
\nc{\PSD}{\HERM_{\pl}}
\nc{\PD}{\HERM_{\pl\pl}}
\nc{\density}{\mathscr{D}}
\nc{\subdensity}{\mathscr{D}_\bullet}
\nc{\polarPSD}[1]{{#1}_{\pl}^{\circ}}
\nc{\polarPSDre}[1]{{#1}_{\pl}^{\star}}
\nc{\polarPD}[1]{{#1}_{\pl\pl}^{\circ}}
\nc{\PPT}{\text{\rm PPT}}
\nc{\Rains}{\text{\rm Rains}}
\nc{\WD}{\text{\rm WD}}
\nc{\SEP}{\text{\rm SEP}}
\nc{\PSEP}{\text{\rm PSEP}}
\nc{\CPTP}{\text{\rm CPTP}}
\nc{\POVM}{\text{\rm POVM}}
\nc{\PVM}{\text{\rm PVM}}
\nc{\CP}{\text{\rm CP}}
\nc{\adv}{\text{\rm adv}}
\nc{\spec}{\text{\rm spec}}
\nc{\poly}{\text{\rm poly}}
\nc{\End}{\operatorname{End}}
\nc{\Par}{\operatorname{Par}}
\nc{\RNG}{\operatorname{RNG}}
\nc{\STAB}{\text{\rm STAB}}
\nc{\epi}{\boldsymbol{\operatorname{epi}}}
\nc{\op}{\boldsymbol{\operatorname{op}}}
\newcommand*\rel@kern[1]{\kern#1\dimexpr\macc@kerna}
\newcommand*\widebar[1]{%
  \begingroup
  \def\mathaccent##1##2{%
    \rel@kern{0.8}%
    \overline{\rel@kern{-0.8}\macc@nucleus\rel@kern{0.2}}%
    \rel@kern{-0.2}%
  }%
  \macc@depth\@ne
  \let\math@bgroup\@empty \let\math@egroup\macc@set@skewchar
  \mathsurround\z@ \frozen@everymath{\mathgroup\macc@group\relax}%
  \macc@set@skewchar\relax
  \let\mathaccentV\macc@nested@a
  \macc@nested@a\relax111{#1}%
  \endgroup
}
\begin{document}

\title{\Large \textbf{Generalized quantum Chernoff bound}}

\author[1]{Kun Fang \thanks{kunfang@cuhk.edu.cn}}
\author[1,2,3]{Masahito Hayashi \thanks{hmasahito@cuhk.edu.cn}}
\affil[1]{\small School of Data Science, The Chinese University of Hong Kong, Shenzhen,\protect\\  Guangdong, 518172, China}
\affil[2]{International Quantum Academy, Futian District, Shenzhen 518048, China}
\affil[3]{Graduate School of Mathematics, Nagoya University, Chikusa-ku, Nagoya 464–8602, Japan}

% updates: remove compactness, permutation invariance, strengthen to limsup, explicit exmample nonadditivty; improve presentation

\date{\today}

\maketitle

\vspace{-0.2cm}
\begin{abstract}
We consider the task of distinguishing whether a quantum system is prepared in a state from one of several sets of quantum states. Assuming their convexity and stability under tensor product, we prove that the optimal error exponent for discrimination is precisely given by the regularized quantum Chernoff divergence between the sets, thereby establishing a generalized quantum Chernoff bound for the discrimination of multiple sets of quantum states. This extends the classical and quantum Chernoff bounds to the general setting of composite and correlated quantum hypotheses. Furthermore, leveraging minimax theorems, we show that discriminating between sets of quantum states is no harder than discriminating between their worst-case elements in terms of error probability. This implies the existence of an optimal state-agnostic test that achieves the minimum error probability for all states in the sets, matching the performance of the optimal state-dependent test for the most difficult pair of states. We provide explicit characterizations of the optimal state-agnostic test in the binary composite case. 
Finally, we show that the maximum overlap between a pure state and a set of free states, a quantity that frequently arises in quantum resource theories, is equal to the quantum Chernoff divergence between the sets, thereby providing an operational interpretation of this quantity in the context of symmetric hypothesis testing.
\end{abstract}

\setcounter{tocdepth}{2}
\tableofcontents

\section{Introduction}

Hypothesis testing is a fundamental method for deciding between competing explanations of observed data. It provides a rigorous framework for making decisions under uncertainty and is central to statistics, information theory, and many applied fields such as signal processing, machine learning, and experimental physics~\cite{lehmann-2022}. The main goal of hypothesis testing is to determine which of several possible models or probability distributions best describes a set of observations, while keeping the probability of making an error as low as possible.
In the classical setting, Chernoff showed that, when many independent samples are available, the probability of error decreases exponentially with the number of samples. The rate at which this error probability decays is precisely given by the celebrated Chernoff bound~\cite{chernoff1952measure}, which quantifies the fundamental limit of distinguishability between two hypotheses. 

\paragraph{Quantum Chernoff bound between two quantum states.} The Chernoff bound has been extended to the quantum setting, where the task is to distinguish between two quantum hypotheses: the system is prepared either in state $\rho_1$ (the null hypothesis) or in state $\rho_2$ (the alternative hypothesis). In the Bayesian framework, each hypothesis is assigned a prior probability, denoted by $\pi_1$ and $\pi_2$. 
Operationally, discrimination is performed using a two-outcome positive operator-valued measure (POVM) $\{M_1, M_2\}$, where $M_1+M_2 = I$. According to Born's rule in quantum mechanics, the average error probability for this measurement is given by
\begin{align}
P_e(\{M_1,M_2\}, \{\rho_1, \rho_2\}) := \pi_1 \tr[\rho_1(I-M_1)] + \pi_2 \tr[\rho_2(I-M_2)].
\end{align}
The objective is to minimize the average error probability over all possible POVMs:
\begin{align}
P_{e, \min}(\{\rho_1, \rho_2\}) := \inf \Big\{P_e(\{M_1,M_2\}, \{\rho_1, \rho_2\}): \{M_1,M_2\} \in \POVM \Big\}.
\end{align}

A landmark result by Audenaert et al.~\cite{audenaert2007discriminating} and Nussbaum and Szkoła~\cite{Nussbaum2009} established that, in the asymptotic regime, the optimal error probability decays exponentially with the number of copies, with the optimal exponent given by
\begin{align}\label{eq: chernoff theorem}
    \lim_{n \to \infty} -\frac{1}{n} \log P_{e,\min}(\{\rho_1^{\otimes n}, \rho_2^{\otimes n}\}) = C(\rho_1 \| \rho_2),
\end{align}
where the quantum Chernoff divergence is defined as
\begin{align}\label{eq: definition of Chernoff divergence}
    C(\rho_1 \| \rho_2) := \max_{0 \leq \alpha \leq 1}  -\log Q_\alpha(\rho_1 \| \rho_2), \quad \text{with} \quad Q_\alpha(\rho_1 \| \rho_2) := \tr[\rho_1^\alpha \rho_2^{1-\alpha}].
\end{align}
This result, known as the quantum Chernoff bound, characterizes the optimal exponential rate at which the error probability decays when discriminating between two independent and identically distributed (i.i.d.) quantum states.

\paragraph{Quantum Chernoff bound for multiple quantum states.} The quantum Chernoff bound has also been extended to the discrimination of multiple quantum hypotheses. Consider $r$ quantum states $\{\rho_i\}_{i=1}^r$ with prior probabilities $\{\pi_i\}_{i=1}^r$. Let $\{M_i\}_{i=1}^r$ be a POVM, i.e., a collection of positive semidefinite operators that sum to the identity. The average error probability for discriminating among these $r$ quantum states is given by
\begin{align}
    P_{e}(\{M_i\}_{i=1}^r, \{\rho_i\}_{i=1}^r) := \sum_{i=1}^r \pi_i \tr[\rho_i (I-M_i)].
\end{align}
Optimizing over all POVMs, the minimum error probability is defined as
\begin{align}\label{eq: Pemin multiple states}
    P_{e,\min}(\{\rho_i\}_{i=1}^r) := \inf \Big\{ P_{e}(\{M_i\}_{i=1}^r, \{\rho_i\}_{i=1}^r): \{M_i\}_{i=1}^r \in \POVM \Big\}.
\end{align}
Li showed that~\cite{li2016discriminating}, in the asymptotic regime,
\begin{align}\label{eq: chernoff theorem for multiple states}
    \lim_{n \to \infty} -\frac{1}{n} \log P_{e,\min}(\{\rho_i^{\otimes n}\}_{i=1}^r) = \min_{i \neq j} C(\rho_i\|\rho_j),
\end{align}
where $C(\rho_i\|\rho_j)$ is the quantum Chernoff divergence between states $\rho_i$ and $\rho_j$. This result generalizes the binary quantum Chernoff bound to the case of multiple quantum states, showing that the optimal error exponent is determined by the pair of states that are hardest to distinguish.

\paragraph{Quantum Chernoff bound for multiple sets of quantum states.} In many practical scenarios, the states to be distinguished are not fully specified (i.e., \emph{composite hypotheses}~\cite{brandao2010generalization,brandao2020adversarial,berta2021composite,Mosonyi_2022,hayashi2024generalized,hayashi2016correlation,DWH25}), such as in adversarial or black-box settings~\cite{fang2025adversarial,watanabe2024black}, and may exhibit correlations that preclude a simple tensor product structure (i.e., \emph{correlated hypotheses}~\cite{hiai2007large,hiai2008error,mosonyi2015two,hayashi2025entanglement}). In this context, the task is to discriminate between multiple \emph{sets of correlated quantum states} (see Figure~\ref{fig: hypothesis testing demo}). That is, a tester receives samples prepared according to one of the sets $\sC_{r,n}$, and determines, via a quantum measurement $\{M_{i,n}\}_{i=1}^r$, from which set the samples originate.

\begin{figure}[H]
    \centering
    \includegraphics[width=0.8\textwidth]{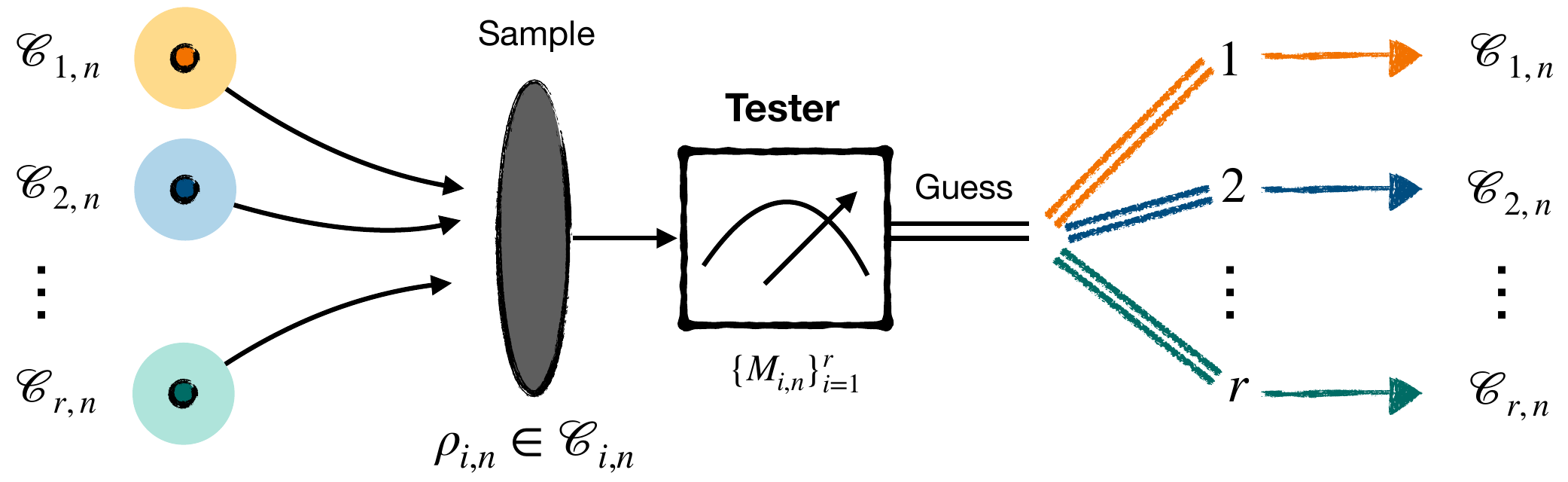}
    \caption{Quantum hypothesis testing for multiple sets of quantum states $\sC_{1,n}, \cdots, \sC_{r,n}$.}
    \label{fig: hypothesis testing demo}
\end{figure}

More precisely, consider the problem of discriminating among $r$ sets of quantum states, denoted by $\sC_{i,n}$ for $i \in \{1, \ldots, r\}$, where each set is associated with a prior probability $\pi_i$ such that $\sum_{i=1}^r \pi_i = 1$. The goal is to determine, via quantum measurement, from which set a given quantum state is drawn, without knowledge of the specific state within each set.
Let $\{M_{i,n}\}_{i=1}^r$ be a POVM for any given $n\in\NN$, where each $M_{i,n}$ corresponds to the decision that the state is drawn from set $\sC_{i,n}$. For each $i$, $\tr[\rho_{i,n} (I - M_{i,n})]$ represents the probability of incorrectly rejecting set $\sC_{i,n}$ when the true state is $\rho_{i,n} \in \sC_{i,n}$. Since the specific state within each set is unknown, we adopt a worst-case approach and define the error probability as the supremum, over all possible choices of states from each set, of the average probability of incorrectly identifying the set:
\begin{align}
    P_{e}(\{M_{i,n}\}_{i=1}^r, \{\sC_{i,n}\}_{i=1}^r) := \sup \left\{\sum_{i=1}^r \pi_i \tr[\rho_{i,n} (I -M_{i,n})]:\rho_{i,n} \in \sC_{i,n}, \forall i \right\}.
\end{align}

To characterize the fundamental limit of discrimination, we minimize the worst-case error probability over all possible POVMs:
\begin{align}
    P_{e,\min}(\{\sC_{i,n}\}_{i=1}^r) := \inf \Big\{P_{e}(\{M_{i,n}\}_{i=1}^r, \{\sC_{i,n}\}_{i=1}^r): \{M_{i,n}\}_{i=1}^r \in \POVM\,\Big\}.
\end{align}
This quantity characterizes the optimal error probability for discriminating among multiple sets of quantum states, accounting for the worst-case selection of states from each set. We are interested in the asymptotic behavior of the minimum error probability:
\begin{align}
    \lim_{n\to \infty} - \frac{1}{n} \log P_{e,\min}(\{\sC_{i,n}\}_{i=1}^r) = \ \ ?
\end{align}

Discriminating between composite and correlated hypotheses (i.e., sets of quantum states) presents several significant challenges. First, discrimination strategies must be state-agnostic, ensuring that the error probability is uniformly controlled for every possible state within each set, regardless of which specific state is sampled. Second, the optimization problem naturally takes a minimax form, requiring a simultaneous maximization over all possible states in the sets and minimization over all quantum measurements. Third, the absence of an i.i.d. structure complicates the asymptotic analysis, as standard techniques based on tensor product structures are no longer directly applicable. Fourth, it becomes necessary to define suitable extensions of the quantum Chernoff divergence to sets of quantum states that both recover known results for i.i.d. sources and accurately capture the essential features of the general composite correlated setting.

Existing results do not directly address this general scenario. While recent progress has been made in the asymmetric (Stein's) regime for binary hypotheses, where the type-II error is minimized subject to a constraint on the type-I error~\cite{fang2024generalized}, the symmetric hypothesis testing for multiple sets of quantum states remains open. 
In this work, we resolve this gap by establishing generalized quantum Chernoff bounds for the discrimination of multiple sets of quantum states, thereby extending the quantum Chernoff bound to the general composite correlated settings.

\paragraph{Summary of main results.}
\begin{itemize}
    \item (\emph{Chernoff bounds:}) We establish generalized quantum Chernoff bounds for the discrimination of multiple sets of quantum states. Specifically, Theorem~\ref{thm: chernoff bound for two sets} (for binary hypotheses) and Theorem~\ref{thm: chernoff bound for multiple sets} (for multiple hypotheses) demonstrate that, under the assumptions of convexity and stability under tensor product, the optimal error exponent is exactly characterized by the regularized Chernoff divergence between the sets. These results unify and extend the classical and quantum Chernoff bounds to the general setting of composite and correlated hypotheses, encompassing several previously known results as special cases.
    \item (\emph{Optimal tests:}) In Theorem~\ref{thm: Pemin minimax multiple}, we show that discriminating between sets of quantum states is no harder than discriminating between their most difficult elements, in terms of error probability. This minimax characterization guarantees the existence of an optimal state-agnostic test that achieves the minimum error probability for all states in the sets, matching the performance of the optimal state-dependent test for the hardest pair of states. Furthermore, Theorem~\ref{thm: optimal test for hypothesis testing} provides an explicit construction of the optimal state-agnostic test via the Holevo-Helstrom test in the binary composite hypothesis setting.
    \item (\emph{Operational interpretation:}) Theorem~\ref{thm: maximum overlap with free states} provides an operational interpretation of the maximum overlap between a pure state and a set of free states, a quantity that frequently arises in quantum resource theories, as the optimal error exponent in symmetric hypothesis testing, justifying the overlap as a meaningful measure of distinguishability in this context.
\end{itemize}

\paragraph{Organization of the paper.} The remainder of this paper is organized as follows. Section~\ref{sec: preliminaries} introduces notation, minimax theorems, and useful lemmas. Section~\ref{sec: Quantum Chernoff divergence between sets of quantum states} defines the quantum Chernoff divergence between sets of quantum states, establishes its properties, and discusses computability and nonadditivity. Section~\ref{sec: Optimal average error probability and optimal test} presents a minimax characterization of the optimal error probability and constructs the optimal test for binary composite hypotheses. Section~\ref{sec: Quantum Chernoff bound for two sets of quantum states} proves the quantum Chernoff bound for two sets of quantum states, and Section~\ref{sec: Quantum Chernoff bound for multiple sets of quantum states} extends the result to multiple sets. Section~\ref{sec: maximum overlap with free states} provides an operational interpretation of the maximum overlap with free states in resource theories. Section~\ref{sec: discussions} concludes with a discussion of open questions and future directions.

\section{Preliminaries}
\label{sec: preliminaries}

\subsection{Notations}

We adopt the following notational conventions throughout this work. Finite-dimensional Hilbert spaces are denoted by $\cH$, with $|\cH|$ representing their dimension. The set of all linear operators on $\cH$ is denoted by $\sL(\cH)$, while $\HERM(\cH)$ and $\HERM_{\pl}(\cH)$ denote the sets of Hermitian and positive semidefinite operators on $\cH$, respectively. The set of density matrices (i.e., positive semidefinite operators with unit trace) on $\cH$ is denoted by $\density(\cH)$. Calligraphic letters such as $\sA$, $\sB$, and $\sC$ are used to denote sets or sequences of sets of linear operators. Unless otherwise specified, all logarithms are taken to base two and denoted by $\log(x)$.
The positive semidefinite ordering is written as $X \geq Y$ if and only if $X - Y \geq 0$. The absolute value of an operator $X$ is defined as $|X| := (X^\dagger X)^{1/2}$. For a Hermitian operator $X$ with spectral decomposition $X = \sum_i x_i E_i$, the projection onto the non-negative eigenspaces is denoted by $\{X \geq 0\} := \sum_{x_i \geq 0} E_{i}$. The Petz \Renyi divergence is define by $D_{\Petz, \alpha}(\rho_1\|\rho_2):=\frac{1}{\alpha-1} \log Q_{\alpha}(\rho_1\|\rho_2)$ and its extension to two sets of quantum states $\sC_1$ and $\sC_2$ is defined as $D_{\Petz, \alpha}(\sC_1\|\sC_2):= \inf_{\rho_1 \in \sC_1, \rho_2 \in \sC_2} D_{\Petz, \alpha}(\rho_1\|\rho_2)$.

\subsection{Minimax theorems}

Consider a function $f: X \times Y \to \RR$ where $X,Y \subseteq \sL$ are nonempty subsets of linear operators, respectively. We always have the \emph{minimax inequalty},
\begin{align}
    \sup_{y \in Y} \inf_{x \in X} f(x,y) \leq \inf_{x \in X} \sup_{y \in Y} f(x,y).
\end{align}
If the equality holds, we call it \emph{minimax equality} and the value on both sides are \emph{minimax value}. The minimax equality is a very important property in optimization and game theory. 

\begin{definition}\label{def: saddle point minimax}
    A pair of solutions $x^* \in X$ and $y^* \in Y$ is called a saddle point of $f$ if 
    \begin{align}
        f(x^*, y) \leq f(x^*, y^*) \leq f(x, y^*), \quad \forall x \in X, y \in Y.
    \end{align}
\end{definition}

\begin{remark}\label{rem: saddle point minimax}
    From the definition, it is clear that $(x^*, y^*)$ is a saddle point of $f$ if and only if $x^* \in X$, $y^* \in Y$, and 
\begin{align}\label{eq: saddle point minimax remark}
    \sup_{y \in Y} f(x^*, y) = f(x^*, y^*) = \inf_{x \in X} f(x, y^*).
\end{align}
That is, $x^*$ minimizes against $y^*$ and $y^*$ maximizes against $x^*$.
\end{remark}

We also recall a standard characterization of saddle points from convex optimization; see, for example,~\cite[Section 3.4]{bertsekas2009convex}. We restate it here and present the proof for completeness.

\begin{lemma}\label{lem: saddle point minimax construction}
    A pair of solutions $(x^*, y^*)$ is a saddle point of $f$ if and only if the minimax equality holds and $x^*$ is an optimal solution of the problem 
    \begin{align}\label{eq: saddle point minimax tmp1}
        \min_{x \in X} \left(\sup_{y \in Y} f(x,y)\right),
    \end{align}
    while $y^*$ is an optimal solution of the problem 
    \begin{align}\label{eq: saddle point minimax tmp2}
        \max_{y \in Y} \left(\inf_{x \in X} f(x,y)\right).
    \end{align}
\end{lemma}
\begin{proof}
    Suppose that $x^*$ is an optimal solution of the problem~\eqref{eq: saddle point minimax tmp1} and $y^*$ is an optimal solution of the problem~\eqref{eq: saddle point minimax tmp2}. Then we have
    \begin{align}\label{eq: saddle point minimax tmp3}
     \sup_{y \in Y} \inf_{x \in X} f(x,y) = \inf_{x \in X} f(x, y^*)\leq f(x^*,y^*) \leq \sup_{y \in Y} f(x^*,y) = \inf_{x \in X} \sup_{y \in Y} f(x,y),
    \end{align}
    where the two equalities follow from the optimality of $x^*$ and $y^*$. Therefore, if the minimax equality holds, then equality holds throughtout above, so that 
    \begin{align}
        \sup_{y \in Y} f(x^*, y) = f(x^*, y^*) = \inf_{x \in X} f(x, y^*).
    \end{align}
    From Remark~\ref{rem: saddle point minimax}, we know that $(x^*, y^*)$ is a saddle point of $f$.

    Conversely, if $(x^*, y^*)$ is a saddle point of $f$, then we have from Eq.~\eqref{eq: saddle point minimax remark} that
    \begin{align}\label{eq: saddle point minimax tmp4}
       \inf_{x \in X} \sup_{y \in Y} f(x,y) \leq  \sup_{y \in Y} f(x^*, y) = f(x^*, y^*) = \inf_{x \in X} f(x, y^*) \leq \sup_{y \in Y} \inf_{x \in X} f(x,y).
    \end{align}
    Combined with the minimax inequality, we conclude the minimax equality. Therefore, equality holds throughtout above, so that $x^*$ is an optimal solution of the problem~\eqref{eq: saddle point minimax tmp1} and $y^*$ is an optimal solution of the problem~\eqref{eq: saddle point minimax tmp2}.
\end{proof}

The following lemma is a minimax theorem that account for the infinity values of the function~\cite[Theorem 5.2]{farkas2006potential}.
Let $X$ be a convex set in a linear space. A function $f: X \to (-\infty,-\infty]$ is convex, if $f(px+(1-p)y) \leq pf(x)+(1-p)f(y)$, the multiplication $0\cdot f(x)$ is interpreted as $0$ and $p\cdot +\infty=+\infty$ for  $p\neq 0$. Similar  definiton holds for concave functions.

\begin{lemma}\label{lem: minimax infinity value}
    Let $X$ be a compact, convex subset of a Hausdorff topological vector space and $Y$ be a convex subset of the linear space. Let $f : X \times Y \to (-\infty,+\infty]$ be lower semicontinuous on $X$ for fixed $y \in Y$, and $f$ is convex in the first and concave in the second variable. Then
\begin{align}
\sup_{y \in Y} \inf_{x \in X} f(x,y) 
= \inf_{x \in X} \sup_{y \in Y} f(x,y).
\end{align}
\end{lemma}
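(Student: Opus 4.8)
The plan is to prove only the nontrivial inequality, since the reverse is the standard max--min bound. First I would record weak duality: for every $x_0 \in X$ and $y_0 \in Y$ one has $\inf_{x} f(x,y_0) \le f(x_0,y_0) \le \sup_{y} f(x_0,y)$, and taking the supremum over $y_0$ and then the infimum over $x_0$ yields $\sup_{y}\inf_{x} f \le \inf_{x}\sup_{y} f$. Setting $v := \sup_{y\in Y}\inf_{x\in X} f(x,y)$, it then remains to show $\inf_{x}\sup_{y} f \le v$. If $v = +\infty$ this is immediate from weak duality, so I may assume $v < +\infty$.

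Next I would reformulate the goal as a nonempty-intersection statement. Fix any real $c > v$ and consider the sublevel sets $C_y := \{x \in X : f(x,y) \le c\}$. Lower semicontinuity of $f(\cdot,y)$ makes each $C_y$ closed, and since $X$ is compact, each $C_y$ is compact. If I can show $\bigcap_{y\in Y} C_y \neq \emptyset$, then there is $x^\ast$ with $\sup_y f(x^\ast,y) \le c$, hence $\inf_x \sup_y f \le c$; letting $c \downarrow v$ finishes the proof. By compactness it suffices to verify the finite intersection property: for every finite family $y_1,\dots,y_n \in Y$ there exists $x$ with $f(x,y_i) \le c$ for all $i$, i.e. $\inf_{x}\max_{1\le i\le n} f(x,y_i) \le c$.

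The crux is this finite case, which I would handle by reducing the infinite convex set $Y$ to a finite-dimensional simplex. Writing $\max_i f(x,y_i) = \max_{\lambda \in \Delta_n} \sum_i \lambda_i f(x,y_i)$ with $\Delta_n$ the probability simplex, the auxiliary function $h(x,\lambda) := \sum_i \lambda_i f(x,y_i)$ is convex and lower semicontinuous in $x$ and affine in $\lambda$, with $X$ compact convex and $\Delta_n \subset \RR^n$ compact convex. In this finite-dimensional, affine-in-$\lambda$ setting a classical minimax theorem (Kneser--Fan) applies, giving $\inf_x \max_\lambda h = \max_\lambda \inf_x h$. Finally, concavity of $f$ in its second argument yields $h(x,\lambda) = \sum_i \lambda_i f(x,y_i) \le f(x,\sum_i \lambda_i y_i)$, so for each $\lambda$ one has $\inf_x h(x,\lambda) \le \inf_x f(x,\sum_i \lambda_i y_i) \le v$; taking the maximum over $\lambda$ gives $\inf_x \max_i f(x,y_i) = \max_\lambda \inf_x h \le v < c$, which is exactly the finite intersection property.

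I expect the main obstacle to be the careful bookkeeping of the $+\infty$ values (using the stated conventions $0\cdot(+\infty)=0$ and $p\cdot(+\infty)=+\infty$ for $p \neq 0$), together with the finite-dimensional minimax step: although reducing to a simplex makes one variable affine and moves the problem into finite dimensions, justifying the exchange still requires an honest argument—for instance a separating-hyperplane/KKM argument applied to the convex ``shadow'' $\{t\in\RR^n : \exists\, x,\ f(x,y_i)\le t_i\ \forall i\}$, or an invocation of the extended-valued Kneser--Fan theorem—and one must check that lower semicontinuity and compactness survive the passage to extended-real values.
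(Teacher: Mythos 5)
First, a point of comparison: the paper does not prove this lemma at all---it is imported verbatim from Farkas and R\'ev\'esz [Theorem 5.2], so your argument must be judged on its own merits rather than against an in-paper proof. Your outer scaffolding is the standard route and is sound: weak duality; reduction to showing $\inf_x \sup_y f \le c$ for each real $c > v$; the sublevel sets $C_y = \{x : f(x,y) \le c\}$ are closed by lower semicontinuity, hence compact as closed subsets of the compact set $X$, and nonempty since $\inf_x f(x,y) \le v < c$; so by the finite intersection property everything reduces to the finite case $\inf_x \max_{i \le n} f(x,y_i) \le c$.

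The genuine gap is in that finite case, and it sits exactly where the extended values live. Your first proposed completion---``classical Kneser--Fan'' applied to $h(x,\lambda) = \sum_i \lambda_i f(x,y_i)$ on $X \times \Delta_n$---is not available off the shelf: the classical theorem is for real-valued functions, and its extended-valued variant on a simplex is itself an instance of the very lemma being proved (take $Y = \Delta_n$), so invoking it leaves the crux unproven. Your fallback, separation applied to the shadow set $T = \{t \in \mathbb{R}^n : \exists x,\ f(x,y_i) \le t_i\ \forall i\}$, is the right elementary idea but fails as sketched in a specific degenerate situation. Separating $T$ from $\{s : s_i \le c\ \forall i\}$ gives $\lambda \in \Delta_n$ with $\langle \lambda, t\rangle \ge c$ on $T$; you then need $\inf_x \sum_i \lambda_i f(x,y_i) \ge c$ before concavity in $y$ can produce the contradiction $v \ge c$. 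But a point $x$ at which some $f(x,y_i) = +\infty$ contributes \emph{no} vector to $T$ at all, while under the convention $0\cdot(+\infty)=0$ such an $x$ can still realize $\sum_i \lambda_i f(x,y_i) < c$ whenever the separating $\lambda$ vanishes on all of its infinite coordinates (e.g.\ $n=2$, $\lambda = (1,0)$: the separation controls $f(x,y_1)$ only for those $x$ with $f(x,y_2) < \infty$). Two observations repair this. First, $T \neq \emptyset$ whenever $v < \infty$: if every $x$ had some $f(x,y_i) = +\infty$, concavity in $y$ would give $f\bigl(\cdot,\tfrac1n\sum_i y_i\bigr) \equiv +\infty$ and hence $v = +\infty$. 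Second, each $f(\cdot,y_i)$ is lsc on compact $X$, hence bounded below by some $m_i > -\infty$; replace $\lambda$ by $\lambda^\varepsilon = (1-\varepsilon)\lambda + \tfrac{\varepsilon}{n}\mathbf{1}$, which has full support. Then for every $x$ either some $f(x,y_i) = +\infty$, in which case $\sum_i \lambda_i^\varepsilon f(x,y_i) = +\infty$ because $\lambda_i^\varepsilon > 0$, or all values are finite and the separation gives $\sum_i \lambda_i^\varepsilon f(x,y_i) \ge (1-\varepsilon)c + \tfrac{\varepsilon}{n}\sum_i m_i$. Concavity in $y$ then yields $v \ge \inf_x f\bigl(x, \sum_i \lambda_i^\varepsilon y_i\bigr) \ge (1-\varepsilon)c + \tfrac{\varepsilon}{n}\sum_i m_i$, and letting $\varepsilon \to 0$ gives $v \ge c$, the desired contradiction. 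With this patch your outline becomes a complete, self-contained proof; without it, the step carrying all the difficulty of the extended-valued statement remains open.
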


Another minimax theorem is given by~\cite[Corollary A.2]{mosonyi2011quantum} or~\cite[Lemma II.2]{mosonyi2023some}.

\begin{lemma}\label{lem: minimax continuous monotone}
Let $X$ be a compact topological space, $Y$ be an ordered set, and let $f : X \times Y \to \mathbb{R} \cup \{\pm \infty\}$ be a function. Assume that (i) $f(\cdot, y)$ is upper semicontinuous for every $y \in Y$, and (ii) $f(x, \cdot)$ is monotonic increasing (or decreasing) for every $x \in X$. Then
\begin{align}
\sup_{x \in X} \inf_{y \in Y} f(x, y) = \inf_{y \in Y} \sup_{x \in X} f(x, y), \tag{II.5}
\end{align}
and the suprema can be replaced by maxima.
\end{lemma}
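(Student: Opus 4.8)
The plan is to prove the two inequalities separately, with the substance lying entirely in one direction. First I would reduce to the decreasing case: if $f(x,\cdot)$ is monotonic increasing for every $x$, replace the order on $Y$ by its reverse, which turns $f$ into a decreasing function without affecting either side of the identity. I would also record two consequences of the hypotheses that are used throughout. For each fixed $y$ the function $f(\cdot,y)$ is upper semicontinuous on the compact space $X$ and therefore attains its supremum, so $\sup_{x} f(x,y) = \max_{x} f(x,y)$; and the ``easy'' inequality $\sup_{x}\inf_{y} f(x,y) \le \inf_{y}\sup_{x} f(x,y)$ holds with no assumptions at all, since for any fixed $x_0,y_0$ one has $\inf_{y} f(x_0,y) \le f(x_0,y_0) \le \sup_{x} f(x,y_0)$.

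For the reverse inequality, set $\alpha := \inf_{y}\sup_{x} f(x,y)$. If $\alpha = -\infty$ the easy direction already forces equality, so I may assume $\alpha > -\infty$. The key step is a compactness argument on superlevel sets. For each real $\beta < \alpha$ and each $y \in Y$ define $C_{y}^{\beta} := \{x \in X : f(x,y) \ge \beta\}$. Upper semicontinuity of $f(\cdot,y)$ makes each $C_{y}^{\beta}$ closed, and since $\max_{x} f(x,y) = \sup_{x} f(x,y) \ge \alpha > \beta$ each $C_{y}^{\beta}$ is nonempty. Monotonicity enters here: because $f(x,\cdot)$ is decreasing and $Y$ is totally ordered (upward directedness would already suffice), any finite collection $y_{1},\dots,y_{n}$ has a greatest element $y_{\ast}$, and $f(x,y_{\ast}) \le f(x,y_{i})$ gives $C_{y_{\ast}}^{\beta} \subseteq \bigcap_{i} C_{y_{i}}^{\beta}$; hence the family $\{C_{y}^{\beta}\}_{y\in Y}$ has the finite intersection property. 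Compactness of $X$ then yields a point $x_{\beta} \in \bigcap_{y} C_{y}^{\beta}$, i.e.\ $f(x_{\beta},y) \ge \beta$ for all $y$, so that $\sup_{x}\inf_{y} f(x,y) \ge \inf_{y} f(x_{\beta},y) \ge \beta$.

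Letting $\beta \uparrow \alpha$ gives $\sup_{x}\inf_{y} f(x,y) \ge \alpha$, which combined with the easy direction proves the identity. To upgrade the outer suprema to maxima, I would note that $h(x) := \inf_{y} f(x,y)$ is an infimum of upper semicontinuous functions and hence itself upper semicontinuous, so it attains its maximum on the compact set $X$; equivalently, intersecting the nested nonempty closed sets $K_{\beta}:=\bigcap_{y} C_{y}^{\beta}$ over $\beta < \alpha$ produces an explicit maximizer $x^{\ast}$ with $\inf_{y} f(x^{\ast},y) = \alpha$. The main obstacle---and the only place the order structure is essential---is establishing the finite intersection property: without total order (or at least upward directedness) the superlevel sets need not be nested, and the statement genuinely fails for an antichain, so monotonicity alone is insufficient and must be combined with the order to collapse every finite family to a single dominating superlevel set.
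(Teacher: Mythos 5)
Your proof is correct. Note that the paper does not prove this lemma at all---it is imported verbatim from the cited references (Corollary A.2 of Mosonyi--Hiai, resp.\ Lemma II.2 of Mosonyi et al.)---so there is no in-paper argument to compare against; your argument (closedness and nonemptiness of the superlevel sets $\{x \in X : f(x,y) \ge \beta\}$ for $\beta$ below the minimax value, the finite intersection property supplied by monotonicity together with the order on $Y$, compactness of $X$, and upper semicontinuity of $x \mapsto \inf_y f(x,y)$ for attainment of the maxima) is essentially the standard proof found in those sources. Your closing remark is also the right reading of the hypothesis: ``ordered set'' must be understood as totally ordered (upward directedness suffices), since for an antichain the monotonicity assumption is vacuous and the conclusion fails.
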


\subsection{Useful lemmas}

The following lemmas are standard results in mathematical analysis and will be used frequently in our proofs. For detailed proofs, see, e.g.,~\cite[Lemma 2.8, 2.9]{ben2023new}.

\begin{lemma}
\label{lem: compact lsc}
Let $X$ be a nonempty compact topological space, and let $f : X \to \overline{\mathbb{R}}$ be a function. Then if $f$ is upper semicontinuous, it attains its maximum, meaning there is some $x \in X$ such that for all $x' \in X$, $f(x') \leq f(x)$. Similarly, if $f$ is lower semicontinuous, it attains its minimum.
\end{lemma}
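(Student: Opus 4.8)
The plan is to prove the upper semicontinuous case—that $f$ attains its maximum—and then deduce the lower semicontinuous case by applying that result to $-f$, using the elementary fact that $f$ is lower semicontinuous on $X$ if and only if $-f$ is upper semicontinuous (extending the sign convention to $\overline{\mathbb{R}}$ via $-(+\infty) = -\infty$). The central tool I would invoke is the standard topological characterization of upper semicontinuity: $f$ is upper semicontinuous if and only if the superlevel set $C_\alpha := \{x \in X : f(x) \geq \alpha\}$ is closed for every $\alpha \in \mathbb{R}$. This recasts the attainment question into a statement about nested closed sets, which is exactly what compactness is designed to resolve.

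First I would set $M := \sup_{x \in X} f(x) \in \overline{\mathbb{R}}$ and dispose of the degenerate case $M = -\infty$: here $f \equiv -\infty$, so since $X$ is nonempty every point is trivially a maximizer. Assuming now $M > -\infty$, I would choose a sequence $(\alpha_n)_{n \geq 1}$ in $\mathbb{R}$ with $\alpha_n \uparrow M$—taking $\alpha_n = M - 1/n$ when $M$ is finite and $\alpha_n = n$ when $M = +\infty$. By the definition of the supremum, each superlevel set $C_{\alpha_n}$ is nonempty, and by upper semicontinuity each is closed; moreover they are nested, $C_{\alpha_1} \supseteq C_{\alpha_2} \supseteq \cdots$, since $\alpha_n$ is increasing.

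The decisive step is to apply compactness of $X$ in the form of the finite intersection property: any finite subfamily $C_{\alpha_{n_1}}, \dots, C_{\alpha_{n_k}}$ has intersection equal to $C_{\max_j \alpha_{n_j}}$, which is nonempty, so the full intersection $\bigcap_{n} C_{\alpha_n}$ is nonempty. Any point $x^\ast \in \bigcap_n C_{\alpha_n}$ satisfies $f(x^\ast) \geq \alpha_n$ for all $n$, hence $f(x^\ast) \geq \sup_n \alpha_n = M$; combined with $f(x^\ast) \leq M$ this forces $f(x^\ast) = M$, exhibiting the desired maximizer. I do not expect a genuine obstacle, as this is a classical argument; the only points requiring care are the bookkeeping around the extended-real values (the $M = -\infty$ and $M = +\infty$ cases) and checking that the superlevel-set characterization of semicontinuity is valid in the purely topological, non-metric setting—which it is, since that characterization is equivalent to the definition via preimages of open rays.
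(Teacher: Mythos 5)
Your proof is correct. Note that the paper does not actually prove this lemma; it states it as a standard fact and defers to the cited reference (\cite[Lemmas 2.8, 2.9]{ben2023new}), so there is no in-paper argument to compare against. Your argument---closed superlevel sets $\{f \geq \alpha\}$ from the topological characterization of upper semicontinuity, nested along $\alpha_n \uparrow M$, combined with the finite intersection property of compact spaces---is the canonical proof of this fact, and you handle the points that usually get glossed over: the degenerate case $M = -\infty$, the case $M = +\infty$, the reduction of the lower semicontinuous case via $-f$, and the fact that the argument uses no metric or sequential compactness (only a sequence of real thresholds, never a sequence in $X$), which matters since $X$ is an arbitrary compact topological space.
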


\begin{lemma} \label{lem: inf usc}
    Let $X$ be a topological space, let $I$ be a set, and let $\{f_i\}_{i \in I}$ be a collection of functions $f_i : X \to \overline{\mathbb{R}}$. Then if each $f_i$ is upper semicontinuous, the function $f(x) = \inf_{i \in I} f_i(x)$ is also upper semicontinuous. Similarly, if each $f_i$ is lower semicontinuous, the pointwise supremum is lower semicontinuous.
\end{lemma}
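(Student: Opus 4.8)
The plan is to reduce the statement to the standard topological characterization of semicontinuity in terms of the openness of strict level sets, after which both claims become one-line verifications using that an arbitrary union of open sets is open. Recall that a function $g : X \to \overline{\mathbb{R}}$ is upper semicontinuous if and only if the strict sublevel set $\{x \in X : g(x) < \alpha\}$ is open for every $\alpha \in \mathbb{R}$, and lower semicontinuous if and only if the strict superlevel set $\{x \in X : g(x) > \alpha\}$ is open for every $\alpha \in \mathbb{R}$. I would adopt this characterization as the working definition, since it is equivalent to the usual $\limsup$/$\liminf$ formulation and is the convenient one here.

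For the first claim, fix $\alpha \in \mathbb{R}$ and consider $f(x) = \inf_{i \in I} f_i(x)$. The key observation is the pointwise equivalence that $\inf_{i \in I} f_i(x) < \alpha$ holds if and only if there exists some $i \in I$ with $f_i(x) < \alpha$: one direction is immediate since the infimum is bounded above by any single $f_i(x)$, and the other follows because if $f_i(x) \geq \alpha$ for every $i$, then the infimum is at least $\alpha$. This yields the set identity
\begin{align}
\{x \in X : f(x) < \alpha\} = \bigcup_{i \in I} \{x \in X : f_i(x) < \alpha\}.
\end{align}
Each set on the right is open because $f_i$ is upper semicontinuous, and an arbitrary union of open sets is open; hence the left-hand side is open for every $\alpha$, so $f$ is upper semicontinuous.

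The second claim follows by the same argument applied to superlevel sets, or most cleanly by duality: if each $f_i$ is lower semicontinuous, then each $-f_i$ is upper semicontinuous, so by the first part $\inf_{i \in I}(-f_i) = -\sup_{i \in I} f_i$ is upper semicontinuous, whence $\sup_{i \in I} f_i$ is lower semicontinuous. I do not anticipate a genuine obstacle here; the only point requiring a little care is that the argument must use the \emph{strict} inequality in the level sets, so that the infimum/existence equivalence holds exactly, and that it remains valid over the extended real line $\overline{\mathbb{R}}$ and for an arbitrary, possibly uncountable, index set $I$. Both of these are accommodated, since the set identity above holds verbatim for extended-real-valued functions and openness is preserved under arbitrary unions.
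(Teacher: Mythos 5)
Your proof is correct and complete: the strict-sublevel-set identity $\{f < \alpha\} = \bigcup_{i \in I}\{f_i < \alpha\}$ together with arbitrary unions of open sets, plus the duality $-\sup_i f_i = \inf_i(-f_i)$ for the second claim, is exactly the standard argument, and your care about the extended real line and uncountable $I$ is warranted and handled properly. The paper itself gives no proof of this lemma (it cites it as a standard result from an external reference), so your self-contained argument matches the intended, standard route.
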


The proof of the quantum Chernoff bound relies on two key lemmas, which will also play a central role in our analysis.

\begin{lemma}\label{lem: Pemin trace norm}
For any $V,W\in \PSD$, it holds that~\cite{holevo1972analog,helstrom1969quantum}
\begin{align}
    \inf_{0\leq M \leq I} \tr [(I-M)V] + \tr[MW]  = \frac{1}{2} (\tr[V+W] - \|V-W\|_1).
\end{align} 
\end{lemma}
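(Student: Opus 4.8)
The plan is to reduce the minimization to a linear optimization over the operator interval $[0,I]$ and then identify the optimizer with a spectral projector. First I would expand the objective: since $\tr[(I-M)V] + \tr[MW] = \tr[V] + \tr[M(W-V)]$, the problem becomes
\begin{align}
    \inf_{0 \le M \le I} \tr[(I-M)V] + \tr[MW] = \tr[V] - \sup_{0 \le M \le I} \tr[M(V-W)].
\end{align}
Writing $\Delta := V - W$, which is Hermitian, it remains only to evaluate $\sup_{0 \le M \le I} \tr[M\Delta]$.

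For the second step I would invoke the Jordan decomposition $\Delta = \Delta_+ - \Delta_-$ into positive and negative parts, where $\Delta_\pm \ge 0$ have mutually orthogonal supports and $|\Delta| = \Delta_+ + \Delta_-$. The claim is that the supremum equals $\tr[\Delta_+]$, attained by the projector $M^\star = \{\Delta \ge 0\}$ onto the non-negative eigenspace. The upper bound follows from the two-sided inequality $\tr[M\Delta] = \tr[M\Delta_+] - \tr[M\Delta_-] \le \tr[M\Delta_+] \le \tr[\Delta_+]$, where the first step uses $\Delta_- \ge 0$ and $M \ge 0$, and the second uses $M \le I$. Attainment is immediate, since $M^\star \Delta_+ = \Delta_+$ and $M^\star \Delta_- = 0$ give $\tr[M^\star \Delta] = \tr[\Delta_+]$. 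This step—establishing optimality of the spectral projector—is the crux of the argument, though it is entirely elementary once the positive/negative part decomposition is in hand; no convexity or minimax machinery is needed here.

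Finally I would convert $\tr[\Delta_+]$ into the stated trace-norm expression. Since $\tr[\Delta] = \tr[\Delta_+] - \tr[\Delta_-]$ and $\|\Delta\|_1 = \tr[|\Delta|] = \tr[\Delta_+] + \tr[\Delta_-]$, adding these identities yields $\tr[\Delta_+] = \tfrac{1}{2}\big(\|\Delta\|_1 + \tr[\Delta]\big)$. Substituting back gives
\begin{align}
    \inf_{0 \le M \le I} \tr[(I-M)V] + \tr[MW]
    &= \tr[V] - \tfrac{1}{2}\big(\|V-W\|_1 + \tr[V-W]\big) \\
    &= \tfrac{1}{2}\big(\tr[V+W] - \|V-W\|_1\big),
\end{align}
which is the desired identity. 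I would also note that the infimum is in fact attained, the optimal measurement being the Holevo–Helstrom projector $M^\star = \{V-W \ge 0\}$; the positive semidefiniteness of $V$ and $W$ is not actually used beyond ensuring the quantities have their operational meaning.
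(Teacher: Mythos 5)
Your proof is correct and complete. Note that the paper does not actually prove this lemma---it is imported from the literature with citations to Holevo and Helstrom---so there is no in-paper argument to compare against; what you have written is the standard proof of the Holevo--Helstrom bound, essentially as it appears in those references. Each step checks out: the reduction of the objective to $\tr[V] - \sup_{0\le M\le I}\tr[M(V-W)]$, the Jordan-decomposition bound $\tr[M\Delta] \le \tr[\Delta_+]$ with attainment at the projector $\{\Delta \ge 0\}$ (which satisfies $0 \le M^\star \le I$, fixes $\Delta_+$, and annihilates $\Delta_-$ by orthogonality of supports), and the identity $\tr[\Delta_+] = \tfrac{1}{2}\bigl(\|\Delta\|_1 + \tr[\Delta]\bigr)$. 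Your closing observation is also accurate: positive semidefiniteness of $V$ and $W$ is never invoked, so the identity holds for arbitrary Hermitian $V, W$, and the optimizer $\{V - W \ge 0\}$ is exactly the Holevo--Helstrom test that the paper later uses in Theorem~\ref{thm: optimal test for hypothesis testing}.
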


\begin{lemma}\label{lem: audenaert}
Let $V,W \in \PSD$ and $\alpha \in [0,1]$. It holds that~\cite{audenaert2007discriminating}
\begin{align}
    \tr[V^\alpha W^{1-\alpha}] \geq \frac{1}{2} \tr[V + W - |V- W|].
\end{align}
\end{lemma}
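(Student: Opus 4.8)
The plan is to discharge the $\alpha$-dependent (fractional power) part of the inequality into a trivial scalar estimate via a classical embedding, thereby isolating the genuinely non-commutative difficulty into a single, $\alpha$-free trace-norm bound. First I would rewrite the right-hand side in operational form: by Lemma~\ref{lem: Pemin trace norm}, $\tfrac12\tr[V+W-|V-W|]=\inf_{0\le M\le I}\{\tr[(I-M)V]+\tr[MW]\}$, so the claim says that $\tr[V^\alpha W^{1-\alpha}]$ (which is real, being invariant under $\dagger$) dominates this minimal ``error''. I would also record the scalar inequality that drives everything: for $a,b\ge0$ and $\alpha\in[0,1]$ one has $\min(a,b)\le a^\alpha b^{1-\alpha}$, since assuming $a\le b$ gives $a^\alpha b^{1-\alpha}\ge a^\alpha a^{1-\alpha}=a$.

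Next comes the classical embedding (following the Nussbaum--Szko\l{}a construction). Writing spectral decompositions $V=\sum_i\lambda_i\proj{e_i}$ and $W=\sum_j\mu_j\proj{f_j}$ and setting $c_{ij}:=|\braket{e_i}{f_j}|^2$, one has $\sum_i c_{ij}=\sum_j c_{ij}=1$ and
\[
\tr[V^\alpha W^{1-\alpha}]=\sum_{ij}\lambda_i^\alpha\mu_j^{1-\alpha}c_{ij}=\sum_{ij}p_{ij}^\alpha q_{ij}^{1-\alpha},
\]
where $p_{ij}:=\lambda_i c_{ij}$ and $q_{ij}:=\mu_j c_{ij}$ define two subnormalized distributions on index pairs. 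Applying the scalar inequality termwise yields $\sum_{ij}\min(p_{ij},q_{ij})\le\tr[V^\alpha W^{1-\alpha}]$. Since $\sum_{ij}p_{ij}=\tr V$ and $\sum_{ij}q_{ij}=\tr W$, the whole statement reduces, after cancellation, to the purely operator-theoretic estimate
\[
\tfrac12\tr[V+W-|V-W|]\le\sum_{ij}\min(p_{ij},q_{ij}),
\]
which is equivalent to the trace-norm bound $\|V-W\|_1\ge\sum_{ij}|\lambda_i-\mu_j|\,c_{ij}$, or equivalently $\tr[(V-W)_+]\ge\sum_{ij}(\lambda_i-\mu_j)_+\,c_{ij}$.

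This last inequality is the main obstacle, and the only place the non-commutativity of $V$ and $W$ enters. I would attack it through the variational formula $\tr[(V-W)_+]=\max_{0\le M\le I}\tr[M(V-W)]$ combined with the layer-cake representation $V-W=\int_0^\infty\big(\{V>t\}+\{W\le t\}-I\big)\,dt$, which rewrites the target right-hand side as $\int_0^\infty\tr[\{V>t\}\{W\le t\}]\,dt$ (using $\tr[P(P+Q-I)]=\tr[PQ]$ for projectors). The difficulty is that no single test operator $M$ is simultaneously optimal at all thresholds $t$; the natural guesses fail. In particular, taking $M$ diagonal in the eigenbasis of $V$ (equivalently, pinching $V-W$ in that basis) only yields $\sum_i(\lambda_i-\bra{e_i}W\ket{e_i})_+$, which is provably too weak because Jensen's inequality for the convex map $(\cdot)_+$ then points the wrong way. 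Overcoming this requires a genuinely global choice of $M$ built from the spectral projector $\{V-W\ge0\}$ and a careful control of the resulting cross terms; this is the technical heart of the argument, and is exactly what is established in the cited work of Audenaert et al. Everything preceding it is routine bookkeeping.
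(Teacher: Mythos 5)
Your bookkeeping up to the reduction is correct: the Nussbaum--Szko\l{}a identity $\tr[V^\alpha W^{1-\alpha}]=\sum_{ij}p_{ij}^\alpha q_{ij}^{1-\alpha}$, the scalar bound $\min(a,b)\le a^\alpha b^{1-\alpha}$, and the equivalence of your target with $\tr[(V-W)_+]\ge\sum_{ij}(\lambda_i-\mu_j)_+\,c_{ij}$ are all fine. The fatal problem is that this reduced inequality is \emph{false}, so the route cannot be completed by anyone, and in particular it is not ``exactly what is established in the cited work of Audenaert et al.'' A $2\times2$ counterexample: take
\begin{align}
V=\begin{pmatrix}2+\ve & \ve\\ \ve & 1+\ve\end{pmatrix},\qquad
W=\begin{pmatrix}2 & 0\\ 0 & 1\end{pmatrix},\qquad 0<\ve\ll 1 .
\end{align}
Then $V-W=\ve\,\begin{pmatrix}1&1\\1&1\end{pmatrix}\ge 0$, so $\tr[(V-W)_+]=2\ve$ exactly, while second-order perturbation theory gives $\lambda_1=2+\ve+\ve^2+O(\ve^3)$, $\lambda_2=1+\ve-\ve^2+O(\ve^3)$ and $c_{12}=c_{21}=\ve^2+O(\ve^3)$, hence
\begin{align}
\sum_{ij}(\lambda_i-\mu_j)_+\,c_{ij}
= (\ve+\ve^2)+(\ve-\ve^2)+\ve^2\cdot 1+O(\ve^3)
= 2\ve+\ve^2+O(\ve^3) \;>\; 2\ve .
\end{align}
(Numerically, at $\ve=0.1$: left side $0.2$, right side $\approx 0.209$; equivalently $\sum_{ij}\min(p_{ij},q_{ij})\approx 2.991 < 3.000 = \tfrac12\tr[V+W-|V-W|]$.) Moreover, since $\tr[(V-W)_+]=\max_{0\le M\le I}\tr[M(V-W)]$, the failure is not a matter of finding a cleverer ``global'' test: the spectral projector $\{V-W\ge 0\}$ already achieves the maximum over all $M$, and the counterexample shows even that maximum falls short of your target. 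Your own observation that the pinched test undershoots by Jensen was the warning sign; there is no room above the maximum to recover the loss.

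The approach was doomed structurally: because $\min(p_{ij},q_{ij})\le p_{ij}^\alpha q_{ij}^{1-\alpha}$ termwise, your intermediate claim is \emph{strictly stronger} than the lemma --- it would say that the quantum error probability is dominated by the classical error probability of the Nussbaum--Szko\l{}a distributions, uniformly and with no $\alpha$. But the NS embedding does not compare error probabilities in either direction: for $V=\proj{0}$, $W=\proj{+}$ the quantum error ($\approx 0.293$) is \emph{smaller} than the NS classical error ($0.5$), while in the counterexample above it is \emph{larger}. This is precisely why the achievability part of the quantum Chernoff bound required a genuinely operator-analytic proof: Audenaert et al.\ prove the trace inequality $\tr[V^\alpha W^{1-\alpha}]\ge\tfrac12\tr[V+W-|V-W|]$ directly (via differentiation/integral-representation arguments on $\alpha\mapsto\tr[V^\alpha W^{1-\alpha}]$), never passing through a classical embedding; the NS distributions enter the literature only in the \emph{converse} part (Nussbaum--Szko\l{}a), and even there only with correction factors, for exactly the reason your reduction fails. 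Note also that the paper under review offers no proof of this lemma --- it is imported from the cited reference --- so the step you defer to the literature is not there, and it cannot be, since it is false.
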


\section{Quantum Chernoff divergence between sets of quantum states}
\label{sec: Quantum Chernoff divergence between sets of quantum states}

In this section, we introduce the quantum Chernoff divergence between sets of quantum states, which generalizes the concept from individual states to sets and serves as a fundamental measure of their distinguishability. For convex sets, this divergence admits a minimax characterization as shown in Theorem~\ref{thm: C and Q between two sets} in terms of the Chernoff quasi-divergence, and this relation extends to the asymptotic case for stable sequences of sets. Notably, the Chernoff divergence between sets can exhibit strict subadditivity, as demonstrated by an explicit example in Example~\ref{ex: nonadditivity}, which underscores the importance of regularization in its definition for sequences of sets.

\begin{definition}[Quantum Chernoff divergence between two sets of quantum states.]\label{def: chernoff divergence}
Let $\cH$ be a finite-dimensional Hilbert space, and let $\sC_1, \sC_2 \subseteq \density(\cH)$ be two sets of quantum states. The quantum Chernoff divergence between these sets is defined as
\begin{align}
    C(\sC_1\|\sC_2) := \inf_{\substack{\rho_1 \in \sC_1\\ \rho_2 \in \sC_2}} C(\rho_1\|\rho_2).
\end{align}

Moreover, let ${\sC}_1 = \{\sC_{1,n}\}_{n\in \NN}$ and $\sC_2 = \{\sC_{2,n}\}_{n\in \NN}$ be two sequences of sets of quantum states, where each $\sC_{1,n}, \sC_{2,n} \subseteq \density(\cH^{\ox n})$. The regularized quantum Chernoff divergence between these sequences is defined as
\begin{align}
    \underline{C}^\reg(\sC_1\|\sC_2) & := \liminf_{n\to \infty} \frac{1}{n} C(\sC_{1,n}\| \sC_{2,n}),\\
    \overline{C}^\reg(\sC_1\|\sC_2) & := \limsup_{n\to \infty} \frac{1}{n} C(\sC_{1,n}\| \sC_{2,n}).
\end{align}
If the following limit exists, we define the regularized Chernoff divergence as
\begin{align}
    C^\reg(\sC_1\|\sC_2) := \lim_{n\to \infty} \frac{1}{n} C(\sC_{1,n}\| \sC_{2,n}).
\end{align}
\end{definition}

\begin{remark}[Attainment.]
    The quantum Chernoff divergence can be written in terms of the Petz \Renyi divergence as $C(\rho_1\|\rho_2) = \sup_{\alpha \in (0,1)} (1-\alpha) D_{\Petz,\alpha}(\rho_1\|\rho_2)$. For any fixed $\alpha$, $D_{\Petz,\alpha}(\rho_1\|\rho_2)$ is lower semicontinuous in $(\rho_1,\rho_2) \in \PD \times \PD$~\cite[Proposition III.11]{mosonyi2023some}. Since $C(\rho_1\|\rho_2)$ is the pointwise supremum of lower semicontinuous functions, it is itself lower semicontinuous in $(\rho_1,\rho_2)$ by Lemma~\ref{lem: inf usc}. Therefore, if the sets $\sC_1$ and $\sC_2$ are compact, Lemma~\ref{lem: compact lsc} implies that the infimum in the definition of $C(\sC_1\|\sC_2)$ is attained.
\end{remark}

In many application scenarios, the sequences of sets under consideration are not arbitrary but possess a structure that is compatible with tensor products. This property, known as \emph{stability} (or closeness) under tensor product, is formalized as follows.

\begin{boxdefinition}[Stable sequence.]\label{def: closed under tensor product}
Let $\sC_1 \subseteq \PSD(\cH_1)$, $\sC_2 \subseteq \PSD(\cH_2)$, and $\sC_{3} \subseteq \PSD(\cH_1 \otimes \cH_2)$. We say that $\{\sC_1, \sC_2, \sC_{3}\}$ is stable under tensor product if, for any $X_1 \in \sC_1$ and $X_2 \in \sC_2$, it holds that $X_1 \otimes X_2 \in \sC_{3}$. In short, we write $\sC_1 \otimes \sC_2 \subseteq \sC_{3}$. A sequence of sets $\{\sC_n\}_{n \in \NN}$ with $\sC_n \subseteq \PSD(\cH^{\ox n})$ is called stable under tensor product if $\sC_n \otimes \sC_m \subseteq \sC_{n+m}$ for all $n, m \in \NN$.
\end{boxdefinition}

\begin{remark}[Subadditivity.]
By its operational meaning, the quantum Chernoff divergence is additive under $n$-fold tensor product states, i.e., for any $n \in \NN$, $C(\rho_1^{\otimes n}\|\rho_2^{\otimes n}) = n\, C(\rho_1\|\rho_2)$. More generally, the Chernoff divergence is subadditive under tensor products of different states:
\begin{align}\label{eq: subadditivity of Chernoff divergence}
    C(\rho_1 \otimes \sigma_1\|\rho_2 \otimes \sigma_2) \leq C(\rho_1\|\rho_2) + C(\sigma_1\|\sigma_2).
\end{align}
This can be seen as follows:
\begin{align}
    C(\rho_1 \otimes \sigma_1\|\rho_2 \otimes \sigma_2) & = \max_{\alpha \in [0,1]}  - \log Q_\alpha (\rho_1 \otimes \sigma_1\|\rho_2 \otimes \sigma_2)  \\
    & = \max_{\alpha \in [0,1]}  - \log Q_\alpha (\rho_1\|\rho_2) - \log Q_\alpha(\sigma_1\|\sigma_2)  \\
    & \leq \max_{\alpha \in [0,1]}  - \log Q_\alpha (\rho_1\|\rho_2)  + \max_{\alpha \in [0,1]}  - \log Q_\alpha(\sigma_1\|\sigma_2)  \\
    & = C(\rho_1\|\rho_2) + C(\sigma_1\|\sigma_2),
\end{align}
where the second equality uses the multiplicativity of $Q_\alpha$ under tensor products, and the inequality follows from splitting the maximization over $\alpha$ for each term.
\end{remark}

\begin{remark}[Limit existence.]\label{rem: limit existence for Chernoff}
As a consequence of the subadditivity of the Chernoff divergence in Eq.~\eqref{eq: subadditivity of Chernoff divergence}, its extension to sets of quantum states is also subadditive, provided that the sequences $\sC_1 = \{\sC_{1,n}\}_{n\in \NN}$ and $\sC_2 = \{\sC_{2,n}\}_{n\in \NN}$ are both stable under tensor product~\cite[Lemma 26]{fang2024generalized}. Therefore, the regularized quantum Chernoff divergence exists by Fekete's lemma and satisfies
\begin{align}\label{eq: multiplicativity of Chernoff divergence}
    C^\reg(\sC_1\|\sC_2) = \overline{C}^\reg(\sC_1\|\sC_2) = \underline{C}^\reg(\sC_1\|\sC_2) = \inf_{n \geq 1} \frac{1}{n} C(\sC_{1,n}\| \sC_{2,n}).
    \end{align}
    At the end of this section, we provide an explicit example in Example~\ref{ex: nonadditivity} demonstrating that the quantum Chernoff divergence between two sets can be strictly subadditive, thereby illustrating the necessity of regularization in general.
\end{remark}

\bigskip
Analogous to the extension of the quantum Chernoff divergence to sets of quantum states, we also generalize the quantum Chernoff quasi-divergence.

\begin{definition}[Quantum Chernoff quasi-divergence between two sets of quantum states.]\label{def: chernoff quasi divergence}  
Let $\alpha \in [0,1]$ and $\cH$ be a finite-dimensional Hilbert space. Let $\sC_1, \sC_2 \subseteq \density(\cH)$ be two sets of quantum states. The quantum Chernoff quasi-divergence between these sets is defined as 
\begin{align}
Q_\alpha(\sC_1\|\sC_2) := \sup_{\substack{\rho_1 \in \sC_1\\ \rho_2 \in \sC_2}} Q_\alpha(\rho_1\|\rho_2).
\end{align}

Moreover, let $\sC_1 = \{\sC_{1,n}\}_{n\in \NN}$ and $\sC_2 = \{\sC_{2,n}\}_{n\in \NN}$ be two sequences of sets of quantum states, where each $\sC_{1,n}, \sC_{2,n} \subseteq \density(\cH^{\ox n})$. The regularized quantum Chernoff quasi-divergence between these sequences is defined as
\begin{align}
\underline{Q}_\alpha^\reg(\sC_1\|\sC_2) & := \liminf_{n\to \infty}  [Q_\alpha(\sC_{1,n}\| \sC_{2,n})]^\frac{1}{n},\\
\overline{Q}_\alpha^\reg(\sC_1\|\sC_2) & := \limsup_{n\to \infty} [Q_\alpha(\sC_{1,n}\| \sC_{2,n})]^\frac{1}{n}.
\end{align}
If the following limit exists, we define the regularized Chernoff quasi-divergence as
\begin{align}
    Q_{\alpha}^\reg(\sC_1\|\sC_2) := \lim_{n\to \infty}  [Q_{\alpha}(\sC_{1,n}\| \sC_{2,n})]^{\frac{1}{n}}.
\end{align}
\end{definition}
\begin{remark}[Limit existence.]\label{rem: limit existence for Chernoff quasi}
The multiplicativity of the quantum Chernoff quasi-divergence under tensor product implies that its extension to sets is supermultiplicative, provided that the sequences are stable under tensor product. Consequently, the regularized quasi-divergence exists by Fekete's lemma and satisfies
\begin{align}\label{eq: multiplicativity of Chernoff quasi-divergence}
    Q_\alpha^\reg(\sC_1\|\sC_2) = \overline{Q}_\alpha^\reg(\sC_1\|\sC_2) = \underline{Q}_\alpha^\reg(\sC_1\|\sC_2) = \sup_{n \geq 1}  [Q_{\alpha}(\sC_{1,n}\| \sC_{2,n})]^{\frac{1}{n}}.
\end{align}
\end{remark}

While the quantum Chernoff divergence and quasi-divergence between two states are directly related through their definitions, we have thus far extended these quantities for sets of quantum states independently. It is therefore natural to ask whether a relationship analogous to $C(\rho_1 \| \rho_2) = \max_{0 \leq \alpha \leq 1}  -\log Q_\alpha(\rho_1 \| \rho_2)$ holds between the Chernoff divergence and quasi-divergence when extended to sets of quantum states. The following result establishes this connection by applying minimax theorems for convex sets.

\begin{boxtheorem}[Finite and asymptotic connections.]\label{thm: C and Q between two sets}
Let $\cH$ be a finite-dimensional Hilbert space, and let $\sC_1, \sC_2 \subseteq \density(\cH)$ be two convex sets of quantum states. Then it holds that
    \begin{align}\label{eq: C and Q between two sets}
    C(\sC_{1}\|\sC_{2}) = \max_{\alpha \in [0,1]} -\log Q_\alpha(\sC_{1}\|\sC_{2}).
\end{align} 
Moreover, let $\sC_1 = \{\sC_{1,n}\}_{n\in \NN}$ and $\sC_2 = \{\sC_{2,n}\}_{n\in \NN}$ be two stable sequences of convex sets of quantum states, where each $\sC_{1,n}, \sC_{2,n} \subseteq \density(\cH^{\ox n})$. Then it holds that
\begin{align}
    C^\infty(\sC_1\|\sC_2) = \max_{\alpha \in [0,1]} -\log Q_\alpha^{\infty}(\sC_1\|\sC_2).
\end{align}
\end{boxtheorem}

\begin{proof}
We have the following chain of equalities:
\begin{align}
    C(\sC_{1}\|\sC_{2}) & = \inf_{\substack{\rho_1 \in \sC_{1}\\\rho_2 \in \sC_{2}}} C(\rho_1\|\rho_2)\\
    & = \inf_{\substack{\rho_1 \in \sC_{1}\\\rho_2 \in \sC_{2}}} \max_{0\leq \alpha \leq 1} -\log Q_\alpha(\rho_1\|\rho_2)\\
    & = -\log  \sup_{\substack{\rho_1 \in \sC_{1}\\\rho_2 \in \sC_{2}}} \min_{0\leq \alpha \leq 1} Q_\alpha(\rho_1\|\rho_2)\\
    & = -\log  \min_{0\leq \alpha\leq 1} \sup_{\substack{\rho_1 \in \sC_{1}\\\rho_2 \in \sC_{2}}}  Q_\alpha(\rho_1\|\rho_2)\\
    & =  \max_{0\leq \alpha\leq 1} \inf_{\substack{\rho_1 \in \sC_{1}\\\rho_2 \in \sC_{2}}}  -\log Q_\alpha(\rho_1\|\rho_2)\\
    & =  \max_{0\leq \alpha\leq 1}  -\log Q_\alpha(\sC_{1}\|\sC_{2}),\label{eq: C set minimax tmp1}
\end{align}
where the first two equalities follow directly from the definitions. By Lieb's concavity theorem, $Q_\alpha(\rho_1\|\rho_2)$ is jointly concave in $(\rho_1, \rho_2)$ (see also~\cite[Proposition 4.8]{Tomamichel2015b}), and it is convex and continuous in $\alpha$~\cite{audenaert2008asymptotic}. These properties allow us to apply the minimax theorem in Lemma~\ref{lem: minimax infinity value} to exchange the order of the supremum and infimum. The final equality then follows from the definition of $Q_\alpha(\sC_{1}\|\sC_{2})$.

We now prove the second statement. Suppose $\sC_1 = \{\sC_{1,n}\}_{n\in \NN}$ and $\sC_2 = \{\sC_{2,n}\}_{n\in \NN}$ are stable sequences under tensor product. By Eq.~\eqref{eq: multiplicativity of Chernoff divergence}, the regularized Chernoff divergence exists and the following chain of equalities holds:
\begin{align}
    C^\reg(\sC_1\|\sC_2) & = \lim_{n\to \infty}  \frac{1}{n} C(\sC_{1,n}\| \sC_{2,n})\\
    & =  \lim_{n\to \infty}  \frac{1}{2^n} C(\sC_{1,2^n}\|\sC_{2,2^n})\\
    & = \inf_{n \geq 1} \frac{1}{2^n} C(\sC_{1,2^n}\|\sC_{2,2^n})\\
    & = \inf_{n \geq 1} \frac{1}{2^n} \max_{\alpha \in [0,1]}  -\log Q_\alpha(\sC_{1,2^n}\|\sC_{2,2^n}) \\
    & = \inf_{n \geq 1}  \max_{\alpha \in [0,1]}  - \frac{1}{2^n} \log Q_\alpha(\sC_{1,2^n}\|\sC_{2,2^n}) ,
\end{align}
where the second line follows by restricting to the subsequence, the third line follows from the subadditivity of the sequence and Fekete's lemma, the fourth line follows from Eq.~\eqref{eq: C set minimax tmp1} established above. Define 
\begin{align}
f(n,\alpha):= - \frac{1}{2^n} \log Q_\alpha(\sC_{1,2^n}\|\sC_{2,2^n}).
\end{align} 
Due to the supermultiplicativity of $Q_\alpha(\sC_{1,2^n}\|\sC_{2,2^n})$, we have 
\begin{align}
f(n+1,\alpha) & = -\frac{1}{2^{n+1}} \log Q_{\alpha}(\sC_{1,2^{n+1}}\|\sC_{2,2^{n+1}})\\
& \leq -\frac{1}{2^{n+1}} \log \left( Q_{\alpha}(\sC_{1,2^n}\|\sC_{2,2^n}) \right)^2  = f(n,\alpha).
\end{align} 
So the objective function $f(n,\alpha)$ is monotone decreasing in $n$ for each fixed $\alpha$. Furthermore, since $Q_{\alpha}(\rho_n\|\sigma_n)$ is continuous in $\alpha \in [0,1]$, and $Q_{\alpha}(\sC_{1,n}\| \sC_{2,n})$ is defined as the pointwise supremum over lower semicontinuous functions, Lemma~\ref{lem: inf usc} ensures that $Q_{\alpha}(\sC_{1,n}\| \sC_{2,n})$ is also lower semicontinuous in $\alpha$. Consequently, $f(n,\alpha)$ is upper semicontinuous in $\alpha$ for each $n$. By applying the minimax theorem in Lemma~\ref{lem: minimax continuous monotone}, we can obtain
\begin{align}
    C^\reg(\sC_1\|\sC_2) & =  \max_{\alpha \in [0,1]} \inf_{n \geq 1}  - \frac{1}{2^n} \log Q_\alpha(\sC_{1,2^n}\|\sC_{2,2^n}) \\
    & = \max_{\alpha \in [0,1]}  -    \log  \sup_{n \geq 1} [Q_\alpha(\sC_{1,2^n}\|\sC_{2,2^n})]^{\frac{1}{2^n}} \\
    & = \max_{\alpha \in [0,1]}  -\log \lim_{n\to \infty} [Q_\alpha(\sC_{1,2^n}\|\sC_{2,2^n})]^{\frac{1}{2^n}} \\
    & = \max_{\alpha \in [0,1]}  -\log Q_\alpha^\reg(\sC_1\|\sC_2) ,
\end{align}
where the last two lines follow from the supermultiplicativity of the sequence and Fekete's lemma. This completes the proof.
\end{proof}

\begin{remark}[Computability.]\label{rem: computability of Chernoff sets}
For any fixed $\alpha \in [0,1]$, the function $Q_\alpha(\rho\|\sigma)$ is jointly concave in $(\rho,\sigma)$, so the quasi-divergence $Q_{\alpha}(\sC_1\|\sC_2)$ can be efficiently computed using the QICS package~\cite{he2024qics} whenever $\sC_1$ and $\sC_2$ admit semidefinite representations. If the sets $\sC_1$ and $\sC_2$ possess additional symmetries, the computational complexity can be further reduced. 

Moreover, since $Q_{\alpha}(\sC_1\|\sC_2)$ is convex and lower semicontinuous in $\alpha$, the minimum over $\alpha \in [0,1]$ is achieved at a unique optimal solution. Thus, the quantum Chernoff divergence $C(\sC_1\|\sC_2)$ can also be efficiently computed by scanning over $\alpha$. 

On the other hand, since $C(\rho\|\sigma)=0$ if and only if $\rho = \sigma$, $C(\sC_1\|\sC_2)$ contains the separability problem~\cite{gurvits2003classical} as a special instance. Therefore, computing the quantum Chernoff divergence between sets of states can be hard in general if the sets do not have semidefinite representations.
\end{remark}

Using the above discussions, we now provide an explicit example for the nonadditivity of quantum Chernoff divergence between sets of quantum states.  

\begin{boxexample}[Nonadditivity of Chernoff divergence between sets of quantum states.]\label{ex: nonadditivity}
Consider two qutrit quantum channels. Let $\cN(\cdot)=\tr[\cdot] \rho$ to be the replacer channel with 
\begin{align}
    \rho = 0.9 \cdot \ket{\psi}\bra{\psi} + 0.1 \cdot \frac{I}{3}, \quad \text{where}\quad \ket{\psi} = \frac{1}{\sqrt{2}}(\ket{0}+\ket{2}).
\end{align}
Let $\cM$ be the platypus channel~\cite{Leditzky_2023}, $\cM(X) = M_0 X M_0^\dagger + M_1 X M_1^\dagger$ with Kraus operators
\begin{align}
    M_0 = \begin{bmatrix}
        \sqrt{p} & 0 & 0\\
        0 & 0 & 0\\
        0 & 1 & 0
    \end{bmatrix},
    \qquad 
    M_1 = \begin{bmatrix}
        0 & 0 & 0\\
        \sqrt{1-p} & 0 & 0\\
        0 & 0 & 1
    \end{bmatrix}.
\end{align}
In this case, we have the image sets of these channels as
\begin{align}
    \sC_{1,n} & = \{\rho^{\ox n}\}, \quad \text{and} \quad
    \sC_{2,n} = \{\cM^{\ox n}(\sigma_n): \sigma_n \in \density((\CC^3)^{\ox n})\}.
\end{align}
As these sets are given by semidefinite constraints, we can efficiently evaluate the Petz \Renyi divergence and Chernoff divergence between them via semidefinite programming (see Remark~\ref{rem: computability of Chernoff sets}). More explicitly, for fixed $\alpha \in (0,1)$, we can efficiently evaluate 
\begin{align}
D_{\Petz, \alpha}(\sC_{1,n}\| \sC_{2,n}) & = \inf_{\sigma_n \in \density} D_{\Petz,\alpha}(\rho^{\ox n}\|\cM^{\ox n}(\sigma_n)),
\end{align}
by the QICS package~\cite{he2024qics}.
Moreover, we can also evaluate the Chernoff divergence
\begin{align}
C(\sC_{1,n}\| \sC_{2,n}) = \sup_{\alpha \in (0,1)} (1-\alpha) D_{\Petz,\alpha}(\sC_{1,n}\| \sC_{2,n}),
\end{align}
by scaning the parameter $\alpha \in (0,1)$ with fine grid.
\end{boxexample}

The numerical results are shown in Figure~\ref{fig:petz_chernoff_nonadd} where we use parameter $s=0.01$ and scan $\alpha\in(0,1)$ with step size $0.01$. Panel (a) displays the Petz \Renyi divergence $D_{\Petz,\alpha}(\sC_{1,n}\| \sC_{2,n})/n$ for $n=1,2,3$ as a function of $\alpha$, while panel (b) shows the objective function of the Chernoff divergence versus $\alpha$, together with its maximum (i.e., the Chernoff divergence $C(\sC_{1,n}\| \sC_{2,n})/n$) for $n=1,2,3$. The plots exhibit a clear separation between different numbers of copies for both the Petz \Renyi and Chernoff divergences, illustrating non-additivity in this example and justifying the necessity of regularization in our definition and the later main results.

\begin{figure}[htp]
\centering
\begin{minipage}{0.8\linewidth}
    \centering
    \includegraphics[width=\linewidth]{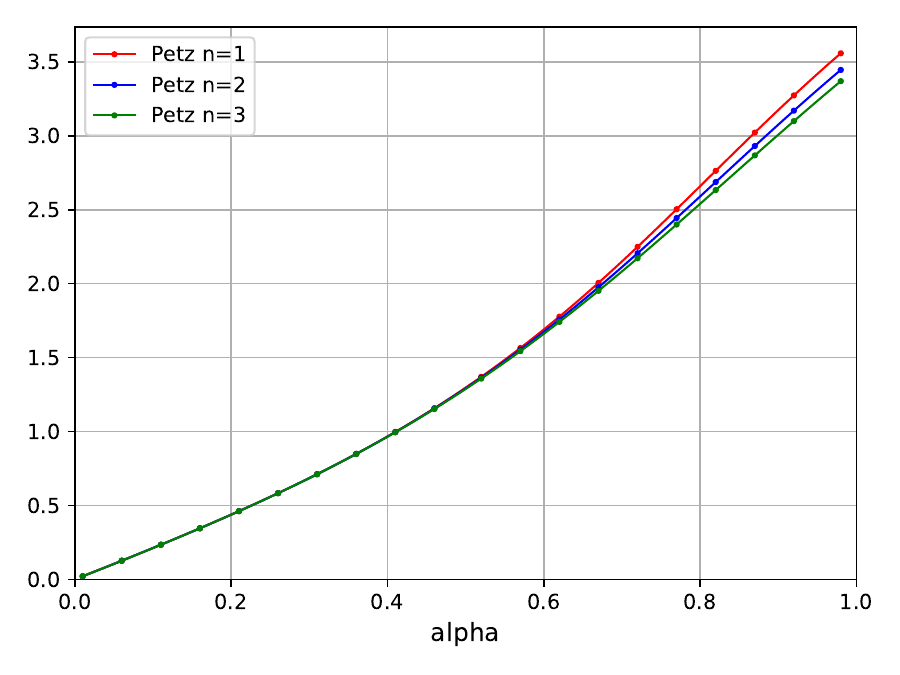}
    \caption*{(a) Petz \Renyi divergence}
\end{minipage}
\vspace{6pt}
\begin{minipage}{0.8\linewidth}
    \centering
    \includegraphics[width=\linewidth]{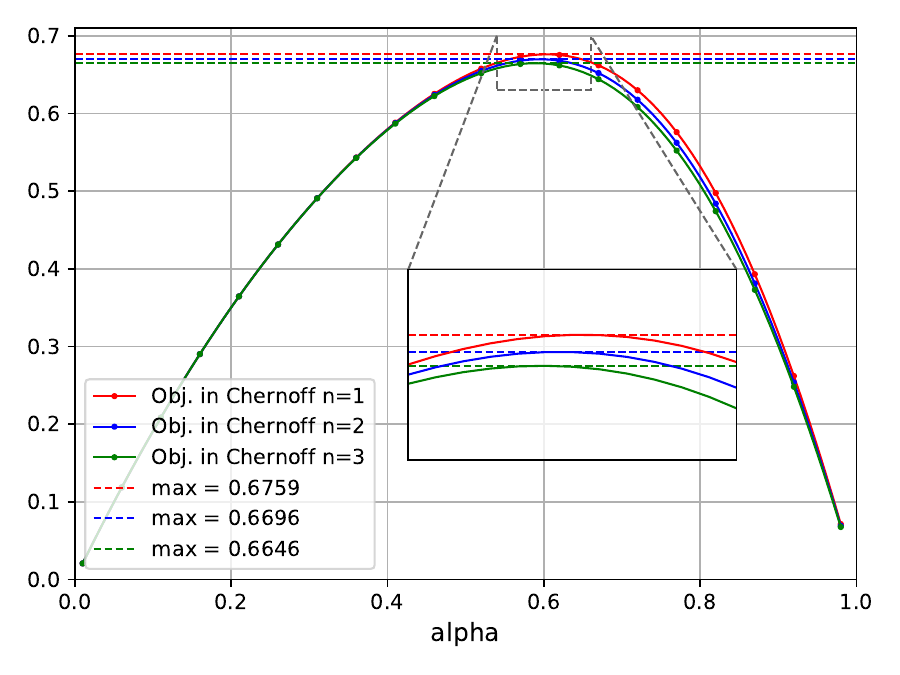}
    \caption*{(b) Chernoff divergence}
\end{minipage}
\caption{Nonadditivity for the Petz \Renyi divergence and Chernoff divergence given in Example~\ref{ex: nonadditivity}. Here, we consider the replacer channel $\cN$ with output state $\rho = 0.9 \cdot \ket{\psi}\bra{\psi} + 0.1 \cdot I/3$, where $\ket{\psi} = (\ket{0}+\ket{2})/\sqrt{2}$, and the platypus channel $\cM$ with channel parameter $s=0.01$. We plot (a) the Petz \Renyi divergence $D_{\Petz,\alpha}(\sC_{1,n}\| \sC_{2,n})/n$ and (b) the objective function in Chernoff divergence for $n=1,2,3$ as functions of $\alpha \in (0,1)$, respectively. The Chernoff divergence $C(\sC_{1,n}\| \sC_{2,n})/n$ is indicated by the maximum value in dashed lines.}
\label{fig:petz_chernoff_nonadd}
\end{figure}

\section{Optimal error probability and optimal test for sets of states}
\label{sec: Optimal average error probability and optimal test}

In this section, we present a minimax formula for the optimal average error probability when distinguishing between multiple sets of quantum states. Specifically, the minimum error probability for composite hypotheses equals the maximum, over all possible choices of states from each set, of the minimum error probability for those states. This reduces the composite problem to the hardest pairwise discrimination. For binary hypotheses, we also give an explicit construction of the optimal state-agnostic test, which is the Holevo-Helstrom test for the most difficult pair of states.

\subsection{Optimal average error probability}

The following result establishes a fundamental relation between discriminating sets of quantum states and discriminating their most challenging elements, based on the minimax theorem.

\begin{boxtheorem}[Optimal error probability.]\label{thm: Pemin minimax multiple}
Let $\cH$ be a finite-dimensional Hilbert space, and let $\{\sC_i\}_{i=1}^r$ be $r$ convex sets of quantum states, where each $\sC_i \subseteq \density(\cH)$. Then it holds that
    \begin{align}\label{eq: Pemin minimax multiple}
P_{e,\min}(\{\sC_i\}_{i=1}^r) = \sup_{\forall i,\, \rho_i\in \sC_i}  P_{e,\min}(\{\rho_i\}_{i=1}^r),
\end{align}
where the suprema can be replaced by maxima if the sets are also compact.
\end{boxtheorem}

\begin{proof}
By definition, we have 
\begin{align}
   P_{e,\min}(\{\sC_i\}_{i=1}^r) = \inf_{\{M_i\}_{i=1}^r} \sup_{\forall i,\, \rho_i \in \sC_i} \sum_{i=1}^r \pi_i \tr[\rho_i (I - M_i)]
\end{align}
Due to the linearity of the error probability $\sum_{i=1}^r \pi_i \tr[\rho_i (I - M_i)]$ in both the measurement operators $\{M_i\}_{i=1}^r$ and the states $\{\rho_i\}_{i=1}^r$, and since the set of all POVMs is convex and compact while each $\sC_i$ is convex by assumption, we can apply Lemma~\ref{lem: minimax infinity value} to exchange the infimum and supremum, yielding
\begin{align}\label{eq: Pemin minimax tmp}
P_{e,\min}(\{\sC_i\}_{i=1}^r) = \sup_{\forall i,\, \rho_i \in \sC_i}  \inf_{\{M_i\}_{i=1}^r} \sum_{i=1}^r \pi_i \tr[\rho_i (I - M_i)]
    & = \sup_{\forall i,\, \rho_i \in \sC_i}  P_{e,\min}(\{\rho_i\}_{i=1}^r),
\end{align}
where the second equality follows from Eq.~\eqref{eq: Pemin multiple states}. 
Note that $P_{e,\min}(\{\rho_i\}_{i=1}^r)$ is upper semicontinuous by Lemma~\ref{lem: inf usc}. Therefore, by Lemma~\ref{lem: compact lsc}, the supremum can be replaced by a maximum whenever the sets are compact.
\end{proof}

\begin{remark}[Computability.]
For a fixed collection of $r$ quantum states, the minimum error probability can be formulated as a semidefinite program (SDP) (see, e.g.,~\cite[Eq. (39)]{li2016discriminating}):
\begin{align}
    P_{e,\min}(\{\rho_i\}_{i=1}^r) = \max_{X\in \HERM} \left\{1-\tr X: X \geq \pi_i \rho_i,\, \forall i = 1,\ldots, r \right\}.
\end{align}
By Theorem~\ref{thm: Pemin minimax multiple}, this extends to
\begin{align}
    P_{e,\min}(\{\sC_i\}_{i=1}^r) = \sup_{\forall i,\, \rho_i \in \sC_i} \max_{X\in \HERM} \left\{1-\tr X: X \geq \pi_i \rho_i,\, \forall i = 1,\ldots, r \right\},
\end{align}
which is also an SDP whenever the sets $\sC_i$ admit semidefinite representations. In such cases, the optimal value can be efficiently computed.
\end{remark}

\subsection{Optimal test for binary composite hypotheses}
\label{sec: optimal test for binary composite hypotheses}
In quantum hypothesis testing between two quantum states $\rho_1$ and $\rho_2$ with prior probabilities $\pi_1$ and $\pi_2$, the optimal measurement is given by the Holevo-Helstrom test $\{\pi_1 \rho_1 - \pi_2 \rho_2 \geq 0\}$~\cite{holevo1972analog,helstrom1969quantum}. However, for composite hypotheses, where only partial information about the states is available, the problem becomes much more challenging. Here, one must design a test that \emph{universally} minimizes the average error probability for all possible states within the specified sets. In particular, the minimax equality in Eq.~\eqref{eq: Pemin minimax tmp} of Theorem~\ref{thm: Pemin minimax multiple},
\begin{align}
\inf_{\{M_i\}_{i=1}^r} \sup_{\forall i,\, \rho_i \in \sC_i} \sum_{i=1}^r \pi_i \tr[\rho_i (I - M_i)] = \sup_{\forall i,\, \rho_i \in \sC_i}  \inf_{\{M_i\}_{i=1}^r} \sum_{i=1}^r \pi_i \tr[\rho_i (I - M_i)],
\end{align}
guarantees the existence of such an optimal state-agnostic test: one that achieves the minimum error probability for all states in the sets (i.e., the minimizer on the left hand side), matching the performance of the optimal state-dependent test for the most difficult pair of states (i.e., the minimizer on the right hand side). While the minimax theorem ensures the existence of an optimal state-agnostic test, it does not provide an explicit construction. 

In this section, we explicitly construct an optimal test for binary composite hypotheses by analyzing the saddle point (or equilibrium point) of the minimax problem in more detail.

\subsubsection{Minimax optimization and saddle points}

To obtain saddle points, we may calculate the inner ``sup'' and ``inf'' functions appearing in the Lemma~\ref{lem: saddle point minimax construction}, then minimize and maximize them, respectively, and obtain the corresponding sets of minima $X^*$ and maxima $Y^*$. If the optimal values are equal (i.e., minimax equality holds), the set of saddle points is $X^* \times Y^*$. Otherwise, there are no saddle points~\cite[Proposition 1.4]{ekeland1999convex}.

While this standard approach to finding saddle points involves evaluating both sides of the minimax inequality, we provide an alternative method in the following. It allows for the construction of saddle points by optimizing only one side of the minimax problem, which will prove useful in the explicit construction of optimal tests later.

\begin{boxlemma}[Saddle point by one-side optimization.]\label{lem: saddle point minimax construction uniqueness}
Let $f: X \times Y \to \RR$ and $g(y) = \inf_{x \in X} f(x,y)$. Suppose $y^*\in Y$ is a maximizer of $g(y)$, i.e., $g(y^*) = \sup_{y \in Y} g(y)$, and $x^* \in X$ is a minimizer of $\inf_{x \in X} f(x,y^*)$, i.e., $f(x^*,y^*) = \inf_{x \in X} f(x,y^*)$. If the minimax equality holds for function $f$ and the optimization $\inf_{x\in X}f(x,y^*)$ has a unique minimizer, then $(x^*, y^*)$ is a saddle point. Consequently, $x^*$ is a minimizer of the optimization $\inf_{x\in X}[\sup_{y \in Y} f(x,y)]$.
\end{boxlemma}

\begin{proof}
Let $x^{**} \in X$ be a minimizer of $\inf_{x \in X} [\sup_{y\in Y} f(x,y)]$. Then we know from Lemma~\ref{lem: saddle point minimax construction} that $(x^{**}, y^*)$ is a saddle point of $f$. By Eq.~\eqref{eq: saddle point minimax tmp4} (note that all equalities holds), any saddle point $(x^*, y^*)$ gives the same minimax value. So we have 
\begin{align}
    f(x^{**}, y^*) = f(x^*, y^*) = \inf_{x \in X} f(x, y^*).
\end{align}
As we assume that $\inf_{x\in X} f(x,y^*)$ has a unique minimizer, we have $x^{**} = x^*$. Therefore, $(x^*, y^*)$ is a saddle point of $f$.
\end{proof}

We note that the uniqueness of the minimizer in $\inf_{x \in X} f(x, y^*)$ is crucial for Lemma~\ref{lem: saddle point minimax construction uniqueness} to apply. If the minimizer is not unique, it is generally unclear whether any minimizer will yield a saddle point. In particular, not every minimizer of $\inf_{x \in X} f(x, y^*)$ is necessarily a minimizer of $\inf_{x \in X}[\sup_{y \in Y} f(x, y)]$. This is illustrated by the following counterexample:
\begin{align}
    \min_{x\in [-1,1]} \max_{y\in[-1,1]} xy =  \max_{y\in[-1,1]} \min_{x\in [-1,1]} xy.
\end{align} 
If we solve $\max_{y\in[-1,1]} (\min_{x\in [-1,1]} xy)$, we find $y^* = 0$, and any $x^* \in [-1,1]$ is a minimizer of $\min_{x\in [-1,1]} x y^*$. However, only $x^* = 0$ is a minimizer of $\min_{x\in [-1,1]} (\max_{y\in[-1,1]} xy)$; that is, only $(0,0)$ is a saddle point.

\begin{boxlemma}\label{lem: optimal test uniqueness}
    Let $X \in \HERM$ be a full rank Hermitian operator. Then the optimal solution to the semidefinite program $\max_{0\leq M \leq I} \tr[XM]$ is unique and is given by $\{X \geq 0\}$.
\end{boxlemma}

\begin{proof}
    Let $X = \sum_{i=1}^d \lambda_i \ket{\psi_i}\bra{\psi_i}$ be the spectral decomposition of $X$, where $\lambda_i > 0$ for $i\in \{1,\cdots,k\}$ and $\lambda_i < 0$ for $i\in \{k+1,\cdots,d\}$. For any $M$, let $m_{ij} = \<\psi_i|M|\psi_j\>$. Then we have
    \begin{align}
        \tr[XM] = \sum_{i=1}^d \lambda_i \tr[\ket{\psi_i}\bra{\psi_i}M] = \sum_{i=1}^d \lambda_i m_{ii} = \sum_{i=1}^k \lambda_i m_{ii} + \sum_{i=k+1}^d \lambda_i m_{ii} \leq \sum_{i=1}^k \lambda_i,
    \end{align}
    where the equality holds if and only if $m_{ii} = 1$ for $i\in \{1,\cdots,k\}$ and $m_{ii} = 0$ for $i\in \{k+1,\cdots,d\}$. Denote the matrix form of $M$ in the basis of $\{\ket{\psi_i}\}_{i=1}^d$ as 
    \begin{align}
    \begin{pmatrix}
        M_{11} & M_{12}\\ M_{21} & M_{22}
    \end{pmatrix}
    \end{align}
    where $M_{11}$ is of size $k\times k$ and $M_{22}$ is of size $(d-k)\times (d-k)$. As $M \geq 0$, we have $M_{22} \geq 0$. Since the diagonal elements of $M_{22}$ are all zeros, we have $M_{22}=0$. This further implies that $M_{12} = 0$ and $M_{21} = 0$. So any optimal $M$ is in the form of block diagonal matrix 
    \begin{align}
    \begin{pmatrix}
        M_{11} & 0\\ 0 & 0
    \end{pmatrix},
    \end{align}
    where $0$ above represents the zero matrix of suitable size. Since $M \leq I$, we have $M_{11}\leq I$. Moreover, since the dignoal elements of $M_{11}$ are all ones, we can conclude that any offdiagonal elements of $M_{11}$ are zeros. This is because any principle submatrix
    \begin{align}
    \begin{pmatrix}
    1 & x \\ x^* & 1
    \end{pmatrix} \leq \begin{pmatrix} 1 & 0 \\ 0 & 1\end{pmatrix}
    \end{align}
    implies that $x = 0$. This shows that the optimal solution is unique and is given by $\sum_{i=1}^k \ket{\psi_i}\bra{\psi_i} = \{X \geq 0\}$. This concludes the proof.
\end{proof}

\subsubsection{Optimal test for binary hypotheses}

With Lemma~\ref{lem: saddle point minimax construction uniqueness} and Lemma~\ref{lem: optimal test uniqueness} in place, we are now prepared to explicitly construct the optimal test. The following result shows that, for binary composite hypotheses, the optimal state-agnostic measurement is given by the Holevo-Helstrom test corresponding to the pair of states from the two sets that achieve the minimal trace distance.

\begin{boxtheorem}[Optimal test for binary hypotheses.]\label{thm: optimal test for hypothesis testing}
 Let $\cH$ be a finite-dimensional Hilbert space. Let $\sC_1, \sC_2 \subseteq \density(\cH)$ be two convex, compact sets of quantum states and $\{\pi_1, \pi_2\}$ be the prior probabilities. Suppose $(\rho_1^*, \rho_2^*) \in \sC_1 \times \sC_2$ is a minimizer of the convex optimization problem $\min_{\rho_1 \in \sC_1,\, \rho_2 \in \sC_2} \|\pi_1 \rho_1 - \pi_2 \rho_2\|_1$. Then, the optimal state-agnostic test for discriminating between $\sC_1$ and $\sC_2$ is given by the projection $\{\pi_1 \rho_1^* - \pi_2 \rho_2^* \geq 0\}$, provided that $\pi_1 \rho_1^* - \pi_2 \rho_2^*$ is full rank.
\end{boxtheorem}

\begin{proof}
By Theorem~\ref{thm: Pemin minimax multiple}, we have the minimax equality:
\begin{align}
   \inf_{0\leq M \leq I} \sup_{\substack{\rho_1 \in \sC_1\\ \rho_2 \in \sC_2}}P_e(\{M,I-M\}, \{\rho_1, \rho_2\}) = \sup_{\substack{\rho_1 \in \sC_1\\ \rho_2 \in \sC_2}} \inf_{0\leq M \leq I}  P_e(\{M,I-M\}, \{\rho_1, \rho_2\}).
\end{align}
Lemma~\ref{lem: Pemin trace norm} states that for any $\rho_1, \rho_2$,
\begin{align}
    \inf_{0\leq M \leq I}  P_e(\{M,I-M\}, \{\rho_1, \rho_2\}) =  \frac{1}{2} \left( 1 - \frac{1}{2} \|\pi_1 \rho_1 - \pi_2 \rho_2\|_1 \right).
\end{align}
Since $(\rho_1^*, \rho_2^*)$ is the minimizer of the convex optimization $\min_{\rho_1 \in \sC_1,\, \rho_2 \in \sC_2} \|\pi_1 \rho_1 - \pi_2 \rho_2\|_1$, it also maximizes $\max_{{\rho_1 \in \sC_1, \rho_2 \in \sC_2}} [\inf_{0\leq M \leq I}  P_e(\{M,I-M\}, \{\rho_1, \rho_2\})]$.
Fixing $\rho_1^*, \rho_2^*$, we have
\begin{align}
    \inf_{0\leq M \leq I}  P_e(\{M,I-M\}, \{\rho_1^*, \rho_2^*\}) =  \pi_1 - \sup_{0\leq M \leq I} \tr M [\pi_1 \rho_1^* - \pi_2 \rho_2^*].
\end{align}
Since $\pi_1 \rho_1^* - \pi_2 \rho_2^*$ is assumed to be full rank, Lemma~\ref{lem: optimal test uniqueness} ensures that $\{\pi_1 \rho_1^* - \pi_2 \rho_2^* \geq 0\}$ is the unique maximizer of $\sup_{0\leq M \leq I} \tr[M(\pi_1 \rho_1^* - \pi_2 \rho_2^*)]$. Therefore, it is also the unique minimizer of $\inf_{0\leq M \leq I}  P_e(\{I-M,M\}, \{\rho_1^*, \rho_2^*\})$.
Finally, by Lemma~\ref{lem: saddle point minimax construction uniqueness}, $\{\pi_1 \rho_1^* - \pi_2 \rho_2^* \geq 0\}$ is the minimizer of $\inf_{0\leq M \leq I} [\max_{{\rho_1 \in \sC_1, \rho_2 \in \sC_2}}P_e(\{M,I-M\}, \{\rho_1, \rho_2\})]$. Therefore, it is the optimal test for distinguishing the sets $\sC_1$ and $\sC_2$.
\end{proof}

\section{Quantum Chernoff bound for two sets of quantum states}
\label{sec: Quantum Chernoff bound for two sets of quantum states}

In this section, we establish the quantum Chernoff bound for the discrimination of two sets of quantum states. We provide both lower and upper bounds on the asymptotic error exponent and show that, under suitable structural assumptions, the bounds coincide and the optimal exponent is given by the regularized Chernoff divergence between the sets.

\begin{boxassumption}[Assumptions for sets of quantum states.]
We denote the following assumptions for a sequence of sets of quantum states $\sC = \{\sC_n\}_{n\in \NN}$ where each $\sC_n \subseteq \density(\cH^{\ox n})$.
\begin{itemize}
    \item[(C1)] Convexity: For any $n \in \NN$, the sets $\sC_n$ are convex.
    \item[(C2)] Stability under tensor product: For any $m,n \in \NN$, it holds that $\sC_{m} \ox \sC_{n} \subseteq \sC_{m+n}$.
    \item[(C3)] Finiteness: $D_{\Petz, \alpha}(\sC_{1}\|\sC_{1}') < \infty$ for $\alpha \in (0,1)$ if considering two sequences $\sC,\sC'$.~\footnotemark
\end{itemize}
\end{boxassumption}

\footnotetext{This is a mild technical assumption, requiring that there exist $\rho \in \sC_1$ and $\rho'\in \sC_1'$ which are not orthogonal.}

\begin{boxtheorem}[Quantum Chernoff bound for binary hypotheses.]\label{thm: chernoff bound for two sets}
Let $\cH$ be a finite-dimensional Hilbert space, and let $\sC_i = \{\sC_{i,n}\}_{n\in \NN}$\, for $i \in \{1, 2\}$ be two sequences of sets of quantum states, where each $\sC_{i,n} \subseteq \density(\cH^{\otimes n})$. Let $\{\pi_1,\pi_2\}$ be the prior probability.
\begin{itemize}
    \item (Lower bound:) If the sequences $\sC_1$ and $\sC_2$ satisfy assumption (C1):
    \begin{align}
    \liminf_{n\to \infty} - \frac{1}{n} \log P_{e,\min}(\{\sC_{1,n}, \sC_{2,n}\}) \geq \underline{C}^\reg(\sC_1\|\sC_2).
    \end{align}
    \item (Upper bound:) If the sequences $\sC_1$ and $\sC_2$ satisfy assumption (C2) and (C3):
    \begin{align}
    \limsup_{n\to \infty} - \frac{1}{n} \log P_{e,\min}(\{\sC_{1,n}, \sC_{2,n}\}) \leq C^\reg(\sC_1\|\sC_2).
    \end{align}
\end{itemize}  
Consequently, if the sequences $\sC_1$ and $\sC_2$ satisfy assumptions (C1), (C2) and (C3), then the following limit exists and is given by
\begin{align}
    \lim_{n\to \infty} - \frac{1}{n} \log P_{e,\min}(\{\sC_{1,n}, \sC_{2,n}\}) = C^\reg(\sC_1\|\sC_2).
\end{align}
\end{boxtheorem}
\begin{remark}
    If we choose $\sC_{i,n}=\{\rho_i^{\ox n}\}$ as the singleton i.i.d. quantum states, then we can recover the  quantum Chernoff bound for two quantum states in~\cite{audenaert2007discriminating,Nussbaum2009}. Note that for general sets $\sC_{i,n}$, i.e., not necessarily convex, we have 
\begin{align}
 P_{e,\min}(\{\sC_{1,n},\sC_{2,n}\}) = P_{e,\min}(\{\conv(\sC_{1,n}),  \conv(\sC_{2,n})\}),
\end{align}
where $\conv(\sC)$ denotes the convex hull of set $\sC$. Therefore, we have 
\begin{align}
    \lim_{n\to \infty} - \frac{1}{n} \log P_{e,\min}(\{\sC_{1,n}, \sC_{2,n}\}) = C^\reg(\conv(\sC_1)\|\conv(\sC_2)),
\end{align}
where $\conv(\sC_i) = \{\conv(\sC_{i,n})\}_{n\in \NN}$ for $i\in\{1,2\}$.
This strengthens~\cite[Eq. (II.62)]{Mosonyi_2022} by showing that the result holds as an equality. 
\end{remark}

\subsection{Proof of lower bound}

For any $\alpha \in [0,1]$, we have that
\begin{align}
    P_{e, \min}(\{\rho_{1,n},\rho_{2,n}\}) & = \frac{1}{2} [\tr(\pi_2 \rho_{2,n} + \pi_1 \rho_{1,n}) - \|\pi_1 \rho_{1,n} - \pi_2 \rho_{2,n}\|_1] \\
    & \leq \pi_1^\alpha \pi_2^{1-\alpha} Q_{\alpha}(\{\rho_{1,n}, \rho_{2,n}\}),
\end{align}
where the equality follows from Lemma~\ref{lem: Pemin trace norm} and the inequality follows from Lemma~\ref{lem: audenaert}.
As this holds for any $\rho_{1,n} \in \sC_{1,n}$ and $\rho_{2,n} \in \sC_{2,n}$, we have from Theorem~\ref{thm: Pemin minimax multiple} and assumption (C1) that
\begin{align}
P_{e,\min}(\{\sC_{1,n},\sC_{2,n}\}) \leq (\pi_1^\alpha \pi_2^{1-\alpha})  Q_{\alpha}(\sC_{1,n}\| \sC_{2,n}) \leq Q_{\alpha}(\sC_{1,n}\| \sC_{2,n}),
\end{align}
where the second inequality follows from the fact that $0 \leq \pi_1^\alpha \pi_2^{1-\alpha} \leq \alpha \pi_1 + (1-\alpha) \pi_2 \leq 1$ for any $\alpha \in [0,1]$.   This gives
\begin{align}
    - \log P_{e,\min}(\{\sC_{1,n},\sC_{2,n}\}) \geq - \log  Q_{\alpha}(\sC_{1,n}\| \sC_{2,n}).
\end{align}
As this holds for any $\alpha \in [0,1]$, we have 
\begin{align}
    - \log P_{e,\min}(\{\sC_{1,n},\sC_{2,n}\}) & \geq \max_{\alpha \in [0,1]}- \log  Q_{\alpha}(\sC_{1,n}\| \sC_{2,n}) = C(\sC_{1,n}\| \sC_{2,n}),
\end{align}
where the equality follows Theorem~\ref{thm: C and Q between two sets}.
This implies 
\begin{align}
   \liminf_{n\to \infty} - \frac{1}{n}\log P_{e,\min}(\{\sC_{1,n},\sC_{2,n}\}) \geq \liminf_{n\to \infty} \frac{1}{n} C(\sC_{1,n}\| \sC_{2,n}) = \underline{C}^\reg(\sC_1\|\sC_2).
\end{align}

\subsection{Proof of upper bound}

We can easily prove the upper bound for limit inferior as follows. For any fixed $m \in \NN$ and any $\rho_{1,m} \in \sC_{1,m}, \rho_{2,m} \in \sC_{2,m}$, we have the following chain of inequalities:
\begin{align}
    \liminf_{n\to \infty} -\frac{1}{n}& \log P_{e,\min}(\{\sC_{1,n},\sC_{2,n}\})\notag\\
     & \leq \liminf_{n\to \infty} -\frac{1}{nm} \log P_{e,\min}(\{\sC_{1,nm},\sC_{2,nm}\})\\
    & \leq \liminf_{n\to \infty} -\frac{1}{nm} \log \sup_{\substack{\rho_{1,mn} \in \sC_{1,mn}\\ \rho_{2,mn} \in \sC_{2,mn}}}P_{e,\min}(\{\rho_{1,nm},\rho_{2,nm}\})\\
    & \leq \liminf_{n\to \infty} -\frac{1}{nm} \log P_{e,\min}(\{(\rho_{1,m})^{\ox n},(\rho_{2,m})^{\ox n}\})\\
    & = \frac{1}{m} C(\rho_{1,m}\|\rho_{2,m}),
\end{align}
where the first inequality follows as the lower limit of a subsequence is no smaller than the lower limit of the sequence, the second inequality holds trivially because for any $\rho_{1,n} \in \sC_{1,n}$ and $\rho_{2,n} \in \sC_{2,n}$, we have $P_{e,\min}(\{\sC_{1,n}, \sC_{2,n}\}) \geq P_{e,\min}(\{\rho_{1,n}, \rho_{2,n}\})$, the third inequality follows by taking a particular feasible solution and the assumption (C3), the equality follows from the quantum Chernoff bound between two quantum states (see Eq.~\eqref{eq: chernoff theorem}). 

As this holds for any $\rho_{1,m} \in \sC_{1,m}$ and $\rho_{2,m} \in \sC_{2,m}$, we have 
\begin{align}
     \liminf_{n\to \infty} -\frac{1}{n} \log P_{e,\min}(\{\sC_{1,n},\sC_{2,n}\}) \leq \frac{1}{m} C(\sC_{1,m}\|\sC_{2,m}).
\end{align}
As this holds for any $m \in \NN$, we have 
\begin{align}
    \liminf_{n\to \infty} -\frac{1}{n} \log P_{e,\min}(\{\sC_{1,n},\sC_{2,n}\}) \leq \liminf_{m \to \infty} \frac{1}{m} C(\sC_{1,m}\|\sC_{2,m}) = C^{\infty}(\sC_1\|\sC_2),
\end{align}
where the equality follows from the assumption (C2) and Remark~\ref{rem: limit existence for Chernoff}.

\bigskip
The above argument only gives the upper bound for limit inferior by choosing a suitable subsequence of i.i.d.~states. However, we show that the upper bound can be strengthened further to limit superior by carefully designing a sequence of states whose limit superior is also upper bounded by the regularized Chernoff divergence. However, as this sequence is not i.i.d. states anymore, its analysis is more challenging and requires the Nussbaum-Szko\l{}a distributions~\cite{Nussbaum2009} and the G\"{a}rtner-Ellis theorem~\cite{Dembo2010}.

Let the spectral decompositions of $\rho$ and $\sigma$ be given by 
\begin{align}
    \rho = \sum_{i=1}^d \lambda_i \ketbra{u_i}{u_i}\quad \text{and} \quad \sigma = \sum_{j=1}^d \mu_j \ketbra{v_j}{v_j},
\end{align}
where $\ket{u_i}$ and $\ket{v_j}$ are two orthonormal bases and $\lambda_i$ and $\mu_j$ are the corresponding eigenvalues, respectively. Then the Nussbaum-Szko\l{}a distributions of 
$\rho,\sigma$ are defined by
\begin{align}
(P_{\rho,\sigma})(i,j) = \lambda_i|\<u_i|v_j\>|^2 \quad \text{and} \quad (Q_{\rho,\sigma})(i,j) = \mu_j |\<u_i|v_j\>|^2,
\end{align}
where $i,j \in \{1,\cdots, d\}$. 

In the remaining discussion of this section, let $\log$ be a logarithm with natural base $e$ for simplicity.
Given a sequence of random variables $\{X_n\}_{n\in \NN}$, the asymptotic cumulant generating function is defined as
\begin{align}
    \Lambda_X(t):= \lim_{n\to \infty} \frac{1}{n} \log \mathbb{E}\left[\exp({nt X_n})\right],
\end{align}
provided that the limit exists. For our purpose, it is sufficient to use the following variant of the G\"{a}rtner-Ellis theorem due to~\cite[Theorem 3.6]{chen2000GeneralizationGartnerEllisTheorem} (see also~\cite[Proposition 17]{hayashi2016correlation}).

\begin{lemma}\label{lem: Gartner-Ellis}
    Assume that the asymptotic cumulant generating function $t\mapsto \Lambda_X(t)$ exists and is differentiable in some interval $(a,b)$. Then, for any $x \in (\lim_{t\to a^+} \Lambda_X'(t), \lim_{t\to b^-} \Lambda_X'(t))$,  
    \begin{align}
        \limsup_{n\to \infty} -\frac{1}{n} \log \Pr\{X_n \geq x\} \leq \sup_{t\in (a,b)} \{tx - \Lambda_X(t)\}.
    \end{align}
\end{lemma}

\begin{boxlemma}\label{LH1}
Let $m \in \NN$ be any integer and $\{\pi_1,\pi_2\}$ be a prior probability. Let 
$\rho_{1,1} \in \sC_{1,1}$, $\rho_{2,1} \in \sC_{2,1}$ 
and
$\rho_{1,m} \in \sC_{1,m}$, $\rho_{2,m} \in \sC_{2,m}$ be quantum states such that $D_{\Petz,\alpha}(\rho_{1,1}\|\rho_{2,1})<\infty $ and $D_{\Petz,\alpha}(\rho_{1,m}\|\rho_{2,m})<\infty $ for any $\alpha \in (0,1)$. We set $k:=\lfloor n/m\rfloor$ and construct quantum states
\begin{align}
    \rho_1^{(n)} :=
\rho_{1,1}^{\otimes n-km}\otimes \rho_{1,m}^{\otimes k}\quad \text{and} \quad
\rho_2^{(n)}:=
\rho_{2,1}^{\otimes n-km}\otimes \rho_{2,m}^{\otimes k}.
\end{align} 
Then
\begin{align}
\limsup_{n\to \infty} -\frac{1}{n}\log P_{e,\min}\left(\{\rho_1^{(n)},\rho_2^{(n)}\}\right) \leq \frac{1}{m}C(\rho_{1,m}\|\rho_{2,m}).
\end{align}
\end{boxlemma}

\begin{proof}
Let ${P^{(n)}}$ and ${Q^{(n)}}$ be the Nussbaum-Szko\l{}a distributions of $\rho_1^{(n)}$ and $\rho_2^{(n)}$. Let 
\begin{align} 
    S_n = \left\{{\pi_1 P^{(n)}} - \pi_2 {Q^{(n)}} > 0\right\},
\end{align} 
be the maximum likelihood test. Consider the random variable
\begin{align}
    X_n(x) := \frac{1}{n} \left(\log \left[\pi_2 {Q^{(n)}}(x)\right] - \log \left[\pi_1 {P^{(n)}}(x)\right]\right),
\end{align}
where $x$ is drawn from the distribution ${P^{(n)}}$. Let $\phi(s):= \frac{1}{m}\log (\rho_{1,m})^{1-s} (\rho_{2,m})^{s}$. 
Then we have the asymptotic cumulant generating function of the random variable $X_n$ as,
\begin{align}
\lim_{n\to \infty} &\frac{1}{n}\log \sum_{x} {P^{(n)}}(x) \exp \left(s n X_n(x) \right)\notag\\
&  = \lim_{n\to \infty}\frac{1}{n}\log \tr {(Q^{(n)})}^s {(P^{(n)})}^{1-s}\\
& = \lim_{n\to \infty} \frac{1}{n} \log \tr (\rho_2^{(n)})^s (\rho_1^{(n)})^{1-s}\\
& =
\lim_{n\to \infty}\frac{1}{n}
\left((n-km) \log \tr (\rho_{1,1})^{1-s} (\rho_{2,1})^{s}
+k \log \tr (\rho_{1,m})^{1-s} (\rho_{2,m})^{s}\right)\\
& =\phi(s),
\end{align}
where the second equality is a simple fact of the Nussbaum-Szko\l{}a distributions (e.g.~\cite[Proposition 1]{audenaert2008asymptotic}). Note that $\phi(s)$ is differentiable and $\phi'(0) = - \frac{1}{m}D(\rho_{1,m}\|\rho_{2,m})$ and $\phi'(1) = \frac{1}{m}D(\rho_{2,m}\|\rho_{1,m})$.  Applying the G\"{a}rtner-Ellis theorem in Lemma~\ref{lem: Gartner-Ellis} for the random variable $X_n$, interval $(0,1)$ and $x=0$, we have
\begin{align}
    \limsup_{n\to \infty} - \frac{1}{n} \log \Pr\{X_n \geq 0\} \leq \sup_{s\in (0,1)} -\phi(s). \label{eq: NS relation tmp1}
\end{align}
Similarly, consider the random variable 
\begin{align} 
    Y_n(x):=\frac{1}{n} \left(\log \left[\pi_1 {P^{(n)}}(x)\right] - \log \left[\pi_2 {Q^{(n)}}(x)\right]\right),
\end{align} 
where $x$ is drawn from the distribution ${Q^{(n)}}$. Then we have the asymptotic cumulant generating function of the random variable $Y_n$ as,
\begin{align}
\lim_{n\to \infty} &\frac{1}{n}\log \sum_{x} {Q^{(n)}}(x) \exp \left(t n Y_n(x)\right)\notag\\
&  = \lim_{n\to \infty}\frac{1}{n}\log \tr {(Q^{(n)})}^{1-t} {(P^{(n)})}^{t}\\
& = \lim_{n\to \infty} \frac{1}{n} \log \tr (\rho_2^{(n)})^{1-t} (\rho_1^{(n)})^{t}\\
& =
\lim_{n\to \infty}\frac{1}{n}
\left((n-km) \log \tr (\rho_{1,1})^{t} (\rho_{2,1})^{1-t}
+k \log \tr (\rho_{1,m})^{t} (\rho_{2,m})^{1-t}\right)\\
& =\phi(1-t).
\end{align}
Applying again the G\"{a}rtner-Ellis theorem in Lemma~\ref{lem: Gartner-Ellis} for the random variable $Y_n$, interval $(0,1)$ and $x=0$, we have
\begin{align}
\limsup_{n\to \infty} -\frac{1}{n} \log \Pr\{Y_n \geq 0\} & \leq \sup_{t \in (0,1)} -\phi(1-t).\label{eq: NS relation tmp2}
\end{align}
By direct calculation, we have the relations
\begin{align}
    \limsup_{n\to \infty} - \frac{1}{n} \log \Pr\{X_n \geq 0\} & = \limsup_{n\to \infty}  -\frac{1}{n}\log \tr P^{(n)} (I-S_n),\label{eq: NS relation tmp3}\\
\limsup_{n\to \infty} - \frac{1}{n} \log \Pr\{Y_n \geq 0\} & = \limsup_{n\to \infty}  -\frac{1}{n}\log \tr Q^{(n)} S_n. \label{eq: NS relation tmp4}
\end{align}
Moreover, the Nussbaum-Szko\l{}a theorem (e.g.~\cite[Lemma 3.4]{hayashi2017quantum}) implies that for any test $T_n$,
\begin{align}
\pi_1 \tr  \rho_1^{(n)} (I-T_n)  & + \pi_2 \tr \rho_2^{(n)} T_n \geq \frac{1}{2}\left(
\pi_1 \tr  {P^{(n)}} (I- S_n) + \pi_2 \tr {Q^{(n)}} S_n\right).
\end{align}
So we have 
\begin{align}\label{eq: NS relation tmp5}
   P_{e,\min}\left(\{\rho_1^{(n)},\rho_2^{(n)}\}\right) \geq \frac{1}{2}\left(
\pi_1 \tr  {P^{(n)}} (I- S_n) + \pi_2 \tr {Q^{(n)}} S_n\right).
\end{align}
Combining Eqs.~\eqref{eq: NS relation tmp1},~\eqref{eq: NS relation tmp2},~\eqref{eq: NS relation tmp3},~\eqref{eq: NS relation tmp4} and~\eqref{eq: NS relation tmp5},
we have
\begin{align}
\limsup_{n\to \infty} & -\frac{1}{n}\log P_{e,\min}\left(\{\rho_1^{(n)},\rho_2^{(n)}\}\right)\notag \\
& \leq \limsup_{n\to \infty} - \frac{1}{n}\log \frac{1}{2} \left[\pi_1\tr P^{(n)} (I-S_n) +\pi_2\tr {Q^{(n)}} S_n\right] \\
& \leq \min \left\{ \sup_{s\in (0,1)} -\phi(s),
\sup_{t \in (0,1)}-\phi(1-t)\right\}\\
& = \sup_{s\in (0,1)} -\phi(s)\\
& = \frac{1}{m}C(\rho_{1,m}\|\rho_{2,m}),\label{eq: lemma tmp1}
\end{align}
where the last equality follows by replacing $t$ with $1-s$. 
\end{proof}

With the above Lemma~\ref{LH1}, we are ready to show the upper bound for limit superior. That is, we aim to 
show that
    \begin{align}
        \limsup_{n\to \infty} -\frac{1}{n} \log P_{e,\min}(\{\sC_{1,n},\sC_{2,n}\})  \le C^\infty(\sC_1\|\sC_2).
    \end{align}
For this, we plan to show that for any fixed $m \in \NN$,
    \begin{align}\label{eq: upper bound proof tmp2}
        \limsup_{n\to \infty} -\frac{1}{n} \log P_{e,\min}(\{\sC_{1,n},\sC_{2,n}\})  \le \frac{1}{m}C(\sC_{1,m}\|\sC_{2,m}).
    \end{align}

If $C(\sC_{1,m}\|\sC_{2,m}) = \infty$, the upper bound holds trivially. It remains to show Eq.~\eqref{eq: upper bound proof tmp2} when $C(\sC_{1,m}\|\sC_{2,m}) < \infty$. So for any $\delta > 0$, there exist $\rho_{1,m} \in \sC_{1,m}$ and $\rho_{2,m} \in \sC_{2,m}$ such that
\begin{align}\label{eq: choice of rhom and sigmam}
    C(\rho_{1,m}\|\rho_{2,m}) \leq C(\sC_{1,m}\|\sC_{2,m}) + \delta < \infty.
\end{align}
This implies that $D_{\Petz, \alpha}(\rho_{1,m}\|\rho_{2,m}) < \infty$ for any $\alpha \in (0,1)$. Otherwise, we will have a contradiction to the finiteness of $C(\rho_{1,m}\|\rho_{2,m})$ by definition in Eq.~\eqref{eq: definition of Chernoff divergence}.

Using this choice of $\rho_{1,m}$ and $\rho_{2,m}$ and taking any $\rho_{1,1},\rho_{2,1}$ such that $D_{\Petz,\alpha}(\rho_{1,1}\|\rho_{2,1}) < \infty$ (the existence of these states is ensured by assumption (C3)), we construct the sequence of states $\rho_1^{(n)}$ and $\rho_2^{(n)}$ as in Lemma~\ref{LH1} and we have 
\begin{align}
    \limsup_{n\to \infty} & -\frac{1}{n}\log P_{e,\min}\left(\{\rho_1^{(n)},\rho_2^{(n)}\}\right) \leq \frac{1}{m} C(\rho_{1,m}\|\rho_{2,m}),
\end{align}
This implies that 
\begin{align}
\limsup_{n\to \infty} -\frac{1}{n}\log P_{e,\min}(\{\sC_{1,n},\sC_{2,n}\}) & \leq \limsup_{n\to \infty} -\frac{1}{n}\log P_{e,\min}(\{\rho_{1}^{(n)}, \rho_{2}^{(n)}\})\\
& \leq \frac{1}{m} C(\rho_{1,m}\|\rho_{2,m})\\
 & \leq \frac{1}{m}C(\sC_{1,m}\|\sC_{2,m}) + \frac{\delta}{m},
\end{align}
where the first line follows because $P_{e,\min}(\{\sC_{1,n},\sC_{2,n}\}) \geq P_{e,\min}(\{\rho_{1,n},\rho_{2,n}\})$ for any $\rho_{1,n} \in \sC_{1,n}$ and $\rho_{2,n} \in \sC_{2,n}$, the last line follows by the choice of $\rho_{1,m}$ and $\rho_{2,m}$ in Eq.~\eqref{eq: choice of rhom and sigmam}.

As this holds for any $\delta > 0$, we have 
\begin{align}
\limsup_{n\to \infty} -\frac{1}{n}\log P_{e,\min}(\{\sC_{1,n},\sC_{2,n}\})\leq \frac{1}{m}C(\sC_{1,m}\|\sC_{2,m}).  
\end{align}
Finally, taking the limit $m\to \infty$, we have 
    \begin{align}
        \limsup_{n\to \infty} -\frac{1}{n} \log P_{e,\min}(\{\sC_{1,n},\sC_{2,n}\})  \le C^\infty(\sC_1\|\sC_2),
    \end{align}
 where the existence of the limit on the right hand side follows from the stability assumption (C3) and Eq.~\eqref{eq: multiplicativity of Chernoff divergence}.
This finishes the proof of the upper bound.

\section{Quantum Chernoff bound for multiple sets of quantum states}
\label{sec: Quantum Chernoff bound for multiple sets of quantum states}

In this section, we generalize the quantum Chernoff bound from the binary case to the discrimination of multiple sets of quantum states. Analogous to the result for multiple quantum states in Eq.~\eqref{eq: chernoff theorem for multiple states}, we demonstrate that the optimal error exponent is characterized by the smallest regularized Chernoff divergence among all pairs of sets. The proof of the lower bound (achievability) is more involved than in the binary case, requiring a discretization technique to control the spectra of the states within the sets. The proof of the upper bound (converse) reduces the problem to the binary case by converting any test for multiple hypotheses into a test between two hypotheses.

\begin{boxtheorem}[Quantum Chernoff bound for multiple hypotheses.]\label{thm: chernoff bound for multiple sets}
Let $\cH$ be a finite-dimensional Hilbert space, and let $\sC_i = \{\sC_{i,n}\}_{n\in \NN}$\, for $i \in \{1, \ldots, r\}$ be $r$ sequences of sets of quantum states, where each  $\sC_{i,n} \subseteq \density(\cH^{\otimes n})$. Let $\{\pi_i\}_{i=1}^r$ be the prior probability. 
\begin{itemize}
    \item (Lower bound:) If the sequences $\{\sC_i\}_{i=1}^r$ satisfy assumption (C1): 
    \begin{align}
    \liminf_{n\to \infty} - \frac{1}{n} \log P_{e,\min}(\{\sC_{i,n}\}_{i=1}^r) \geq \min_{i\neq j} \underline{C}^\infty(\sC_{i}\|\sC_{j}).
    \end{align}
    \item (Upper bound:) If the sequences $\{\sC_i\}_{i=1}^r$ satisfy assumption (C2) and (C3):
    \begin{align}
    \limsup_{n\to \infty} - \frac{1}{n} \log P_{e,\min}(\{\sC_{i,n}\}_{i=1}^r) \leq \min_{i\neq j} C^\infty(\sC_{i}\|\sC_{j}).
    \end{align}
\end{itemize}
Consequently, if the sequences $\{\sC_i\}_{i=1}^r$ satisfy assumptions (C1), (C2) and (C3), then the following limit exists and is given by
\begin{align}
    \lim_{n\to \infty} - \frac{1}{n} \log P_{e,\min}(\{\sC_{i,n}\}_{i=1}^r) = \min_{i\neq j} C^\infty(\sC_{i}\|\sC_{j}).
\end{align}
\end{boxtheorem}

\begin{remark}
    If we choose $\sC_{i,n}=\{\rho_i^{\ox n}\}$ as the singleton i.i.d. quantum states, then we can recover the  quantum Chernoff bound for multiple quantum states in~\cite{li2016discriminating}. Note that for general sets $\sC_{i,n}$, i.e., not necessarily convex, we have 
\begin{align}
 P_{e,\min}(\{\sC_i\}_{i=1}^r) = P_{e,\min}(\{\conv(\sC_i)\}_{i=1}^r),
\end{align}
where $\conv(\sC)$ denotes the convex hull of set $\sC$. Therefore, we have 
\begin{align}
    \lim_{n\to \infty} - \frac{1}{n} \log P_{e,\min}(\{\sC_{i,n}\}_{i=1}^r) = \min_{i\neq j} C^\infty(\conv(\sC_{i})\|\conv(\sC_{j})),
\end{align}
where $\conv(\sC_{i}) = \{\conv(\sC_{i,n})\}_{n\in \NN}$.
\end{remark}

\subsection{Proof of lower bound}

As the proof is a bit long, we divide it into a few steps.

\paragraph{Step 1)} For any $\delta > 0$, let $\rho^*_{i,n}\in \sC_{i,n}$ be quantum states such that
\begin{align}\label{eq: choice of rho star}
    P_{e,\min}(\{\sC_{i,n}\}_{i=1}^r) = \sup_{\forall i,\, \rho_i\in \sC_i}  P_{e,\min}(\{\rho_i\}_{i=1}^r) \leq P_{e,\min}(\{\rho^*_{i,n}\}_{i=1}^r) + \delta,
\end{align}
where the equality follows from Theorem~\ref{thm: Pemin minimax multiple} and assumption (C1). 
Our goal is to upper bound $P_{e,\min}(\{\rho^*_{i,n}\}_{i=1}^r)$ using the result from~\cite[Theorem 2]{li2016discriminating} for a collection of states. However, the upper bound depends on the spectra of the states to discriminate. To address this, we modify each state $\rho_{i,n}^*$ to obtain a new state with a finite size spectrum, while ensuring that the modification does not significantly affect the relevant quantities. This is achieved via a spectrum discretization technique, similar to those used in~\cite{hayashi2014large,hayashi2024generalized}.

For any quantum states $\rho_i$ with spectral decomposition
$\rho_i=\sum_{k} r_{i,k}E_{i,k}$, we define
\begin{align}
Q(\rho_{i}\|\rho_{j})
:=\sum_{k,l}    \min\{ r_{i,k}, r_{j,l}\}\tr E_{i,k} E_{j,l},
\end{align}
and the generalization to sets of quantum states as
\begin{align}
    Q(\sC_{i}\|\sC_{j}):= \sup_{\substack{\rho_{i} \in \sC_{i}\\ \rho_{j} \in \sC_{j}}} Q(\rho_{i}\|\rho_{j}).
\end{align}
Observe that for any real numbers $a, b > 0$, we have $\min\{a, b\} \leq \min\{a^\alpha b^{1-\alpha} : \alpha \in [0,1]\}$. Applying this to the definition of $Q(\rho_{i}\|\rho_{j})$, it follows that
\begin{align}
Q(\rho_{i}\|\rho_{j}) \leq \min_{\alpha \in [0,1]} Q_\alpha(\rho_{i}\|\rho_{j}) =  2^{- C(\rho_{i}\|\rho_{j})}.
\end{align}
Consequently,
\begin{align}\label{eq: Q C relation}
    Q(\sC_{i}\|\sC_{j}) \leq 2^{- C(\sC_{i}\|\sC_{j})}.
\end{align}

We define $d := \dim {\cal H}$ and choose $\lambda_n>0$ as
    \begin{align}\label{eq: choice of lambda}
\lambda_n:= \frac{1}{n}\min_{i\neq j} C(\sC_{i,n}\|\sC_{j,n})+2 \log d.
    \end{align}
Define the function for discretization as
\begin{align}
    g_n(x):= 
\begin{cases}    
    & \left\lfloor \frac{x}{\lambda_n}\right \rfloor 
    \lambda_n \quad \quad \ \hbox{if } 0\leq x \le n \lambda_n,\\
& \quad n\lambda_n  \quad \quad  \quad \hbox{if }\quad \ \ \ x > n \lambda_n.
\end{cases}    
\end{align}
Then the range of $g_n$ is a discrete set $\{0, \lambda_n, 2\lambda_n, \ldots, n\lambda_n\}$.
Moreover, for $x \le n \lambda_n$, we have 
\begin{align}
    x-\lambda_n \leq g_n(x) \leq x.\label{VH1}
\end{align}
For $x > n \lambda_n$, we have only the following relation
\begin{align}
n \lambda_n =   g_n(x) \leq x.\label{VH2}
\end{align}

Let 
\begin{align} \label{eq: rho in spectrum decomposition}
    \rho_{i,n}^*=\sum_{k} 2^{- r_{i,n,k}}E_{i,n,k},
\end{align} 
be the spectrum decomposition with $ E_{i,n,k}$ being rank-one projections.
Then, we define the modified quantum state as
\begin{align} \label{eq: definition of modified state}
    \rho_{i,n}' := \frac{1}{c_{i,n}}\sum_k 2^{ -g_n(r_{i,n,k})} E_{i,n,k},
\end{align}
with the normalization factor
\begin{align}
    c_{i,n}:=& \tr \left[\sum_k 2^{ -g_n(r_{i,n,k})} E_{i,n,k}\right].
\end{align}

For the normalization factor, we have
\begin{align} 
c_{i,n} 
= & \tr \left[\sum_{k \,:\, r_{i,n,k} \le n \lambda_n} 2^{ -g_n(r_{i,n,k})} E_{i,n,k}\right]
+ \tr \left[\sum_{k \,:\, r_{i,n,k} > n \lambda_n} 2^{ -g_n(r_{i,n,k})} E_{i,n,k}\right] \\
\le &
 \tr \left[\sum_{k \,:\, r_{i,n,k} \le n \lambda_n} 2^{ -r_{i,n,k} +\lambda_n} E_{i,n,k}\right]
+ \tr \left[\sum_{k \,:\, r_{i,n,k} > n \lambda_n} 2^{-n \lambda_n}E_{i,n,k} \right]\\
\le &
 2^{ \lambda_n} \tr \rho_{i,n}^*
+ 2^{ -n \lambda_n} d^n \\
= & 2^{ \lambda_n} + 2^{ -n \lambda_n} d^n  \\
\le & 2^{ \lambda_n} + d^{-n},
\end{align}
where the first line splits the summation into two parts, the second line applies the bounds from Eq.~\eqref{VH1} and Eq.~\eqref{VH2}, the third line uses the fact that the first sum is a partial sum over the spectrum decomposition of $\rho_{i,n}^*$ while the second sum is bounded by the dimension $d^n$, and the last line follows from the choice of $\lambda_n$, which ensures $\lambda_n \geq 2\log d$.

On the other hand, we also have 
\begin{align}
c_{i,n} \rho_{i,n}'=& \sum_k 2^{ -g_n(r_{i,n,k})} E_{i,n,k}
\ge  \sum_k 2^{ -r_{i,n,k}} E_{i,n,k} = \rho_{i,n}^*,
\end{align}
which then implies that 
\begin{align}\label{eq: cin geq 1}
    c_{i,n} \geq 1.
\end{align}

\paragraph{Step 2)}
Now we apply the result~\cite[Theorem 2]{li2016discriminating} to a collection of unnormalized states $\{c_{i,n} \rho_{i,n}'\}_{i=1}^r$ to upper bound $P_{e,\min}(\{\sC_{i,n}\}_{i=1}^r)$.
Since $\rho_{i,n}^* \leq c_{i,n} \rho_{i,n}'$,  we have
\begin{align}
    P_{e,\min}(\{\sC_{i,n}\}_{i=1}^r) & \leq P_{e,\min}(\{\rho_{i,n}^{*}\}_{i=1}^r) + \delta \leq  P_{e,\min}(\{c_{i,n} \rho'_{i,n}\}_{i=1}^r) + \delta.
\label{eq: multiple lower bound proof tmp1}
\end{align}

It is known from~\cite[Theorem 2]{li2016discriminating} that for any collection of unnormalized quantum states $\{\rho_i\}_{i=1}^r$ with prior probabilities $\{\pi_i\}_{i=1}^r$,
the minimum error probability admits the upper bound
\begin{align}\label{eq: chernoff state P upper bound}
    P_{e,\min}(\{\rho_i\}_{i=1}^r) \leq  f(\{\rho_i\}_{i=1}^r) \sum_{i<j} 
      Q(\rho_{i}\|\rho_{j}),
\end{align}
where the coefficient is given by
\begin{align} 
   f(\{\rho_i\}_{i=1}^r) &:= 10r^2 \left(\max_{i} |\spec(\rho_i)|\right)^2,
\end{align} 
and $|\spec(X)|$ denotes the number of distinct eigenvalues of $X$. 

Applying this result to the states $\{c_{i,n}\rho_{i,n}'\}_{i=1}^r$, we have the upper bound
\begin{align}\label{eq: multiple lower bound proof tmp2}
    P_{e,\min}(\{c_{i,n}\rho_{i,n}'\}_{i=1}^r) 
    \leq  10r^4(n+1)^2  \max_{i\neq j}  
    Q(c_{i,n}\rho'_{i,n}\|c_{j,n}\rho'_{j,n}) ,
\end{align}
where we relax the summation in Eq.~\eqref{eq: chernoff state P upper bound} by the maximum value times the number of terms $C_r^2 = \frac{r(r-1)}{2} \leq r^2$ and use the fact that the number of distinct eigenvalues of $c_{i,n}\rho_{i,n}'$ is at most $n+1$ due to the discretization of $g_n$.

\paragraph{Step 3)}

Next, we upper bound $Q(c_{i,n}\rho_{i,n}'\|c_{j,n}\rho_{j,n}')$ by $2^{-n \min_{i\neq j} C(\sC_{i,n}\|\sC_{j,n})}$ for any fixed $i\neq j$. This shows that our discretization procedure does not incur any loss in the asymptotic exponent.

For this, we have by definition that
\begin{align}\label{eq: Q rho prime sum tmp1}
  Q(c_{i,n}\rho_{i,n}'\|c_{j,n}\rho_{j,n}')  
   =&\sum_{k,l} 
\min  \left\{  2^{ -g_n(r_{i,n,k})} , 2^{ -g_n(r_{j,n,l})}\right\} \tr E_{i,n,k} E_{j,n,l}.
\end{align} 
The summation in Eq.~\eqref{eq: Q rho prime sum tmp1} can be split into two parts. Denote the sets of indices:
\begin{align}
    S_1:= & \Big\{(k,l): \max\{r_{i,n,k},r_{j,n,l}\} > n\lambda_n\Big\},\\
    S_2:= & \Big\{(k,l): \max\{r_{i,n,k},r_{j,n,l}\} \leq n\lambda_n\Big\}.
\end{align}

For any $(k,l) \in S_1$, we have 
\begin{align}
\min  \left\{  2^{ -g_n(r_{i,n,k})} , 2^{ -g_n(r_{j,n,l})}\right\} = & 2^{ -g_n\left(\max\{r_{i,n,k},r_{j,n,l}\}\right)} \leq 2^{-n\lambda_n}.
\label{BNR}
\end{align}
This implies that 
\begin{align}
\sum_{(k,l)\in S_1}  
\min  &\left\{  2^{ -g_n(r_{i,n,k})} , 2^{ -g_n(r_{j,n,l})}\right\} \tr E_{i,n,k} E_{j,n,l} \notag \\
& \leq  2^{-n\lambda_n}  \sum_{(k,l)\in S_1}
\tr E_{i,n,k} E_{j,n,l} \leq d^{2n} 2^{-n\lambda_n} = 2^{-n \min_{i\neq j} C(\sC_{i,n}\|\sC_{j,n})}, \label{eq: Q rho prime sum tmp2} \end{align}
where we use the fact that the total number of terms in the summation is at most $d^{2n}$ and each term $\tr E_{i,n,k} E_{j,n,l} \leq 1$ as $E_{i,n,k}$ and $E_{j,n,l}$ are rank-one projections, and the equality follows from the choice of $\lambda_n$ in Eq.~\eqref{eq: choice of lambda}.

For any $(k,l) \in S_2$, we have by Eq.~\eqref{VH1} that   
\begin{align}
\min & \left\{  2^{ -g_n(r_{i,n,k})} , 2^{ -g_n(r_{j,n,l})}\right\}\notag\\
&\hspace{2cm} \leq  \min  \left\{  2^{ -(r_{i,n,k}-\lambda_n)} , 2^{ -(r_{j,n,l}-\lambda_n)}\right\}
= 2^{ \lambda_n} \min\left\{2^{-r_{i,n,k}}, 2^{-r_{j,n,l}}\right\}.
\end{align}
This implies that 
\begin{align}
\sum_{(k,l) \in S_2} & 
\min  \left\{  2^{ -g_n(r_{i,n,k})} , 2^{ -g_n(r_{j,n,l})}\right\} \tr E_{i,n,k} E_{j,n,l} \notag \\
& \leq  2^{ \lambda_n}  \sum_{(k,l) \in S_2}
\min\left\{2^{-r_{i,n,k}}, 2^{-r_{j,n,l}}\right\} \tr E_{i,n,k} E_{j,n,l}\\
& \leq 2^{ \lambda_n}  Q(\rho_{i,n}^*\|\rho_{j,n}^*)\\
& \leq 2^{\lambda_n} Q(\sC_{i,n}\|\sC_{j,n})\\
& \leq 2^{\lambda_n} 2^{-n C(\sC_{i,n}\|\sC_{j,n})}\\
& \leq 2^{\lambda_n} 2^{-n \min_{i\neq j} C(\sC_{i,n}\|\sC_{j,n})},\label{eq: Q rho prime sum tmp3}
\end{align}
where we relax the summation by including all terms in the second inequality and use Eq.~\eqref{eq: Q C relation} in the fourth inequality.

Combining Eqs.~\eqref{eq: Q rho prime sum tmp1},~\eqref{eq: Q rho prime sum tmp2} and~\eqref{eq: Q rho prime sum tmp3}, we have
\begin{align}
    Q(c_{i,n}\rho_{i,n}'\|c_{j,n}\rho_{j,n}')  
    \leq (1+2^{\lambda_n}) 2^{-n \min_{i\neq j} C(\sC_{i,n}\|\sC_{j,n})}
   \label{eq: multiple lower bound proof tmp3}.
\end{align}

\paragraph{Step 4)}
Combining Eqs.~\eqref{eq: multiple lower bound proof tmp1},~\eqref{eq: multiple lower bound proof tmp2} and~\eqref{eq: multiple lower bound proof tmp3}, we have
\begin{align}
P_{e,\min}(\{\sC_{i,n}\}_{i=1}^r) {\le}& P_{e,\min}(\{c_{i,n} \rho'_{i,n}\}_{i=1}^r) + \delta \\
         {\le}& 10r^4(n+1)^2  \max_{i\neq j}  
    Q(c_{i,n}\rho'_{i,n}\|c_{j,n}\rho'_{j,n})+ \delta \\
         {\le} &10r^4(n+1)^2(1+2^{\lambda_n}) 2^{-n \min_{i\neq j} C(\sC_{i,n}\|\sC_{j,n})}
+ \delta.
\end{align}

Since $\delta $ is arbitrary, we have
\begin{align}
P_{e,\min}(\{\sC_{i,n}\}_{i=1}^r) \leq &
         10r^4(n+1)^2 (1+2^{\lambda_n}) 2^{-\min_{i\neq j}  C(\sC_{i,n}\|\sC_{j,n})}.
\label{BJI}
\end{align}

\paragraph{Step 5)}
Finally, we show that the prefactor $10r^4(n+1)^2 (1+2^{\lambda_n})$ does not affect the asymptotic error exponent in the large $n$ limit.

Since $\lambda_n > 0$, we have  
\begin{align} 
    -\log \left(1+2^{\lambda_n}\right) \geq -\log (2^{\lambda_n} + 2^{\lambda_n}) = -\log (2^{\lambda_n+1}) = -(\lambda_n +1).
\end{align} 
This gives 
\begin{align}
    -\log \left(1+2^{\lambda_n}\right) & \geq - \left(\frac{1}{n}\min_{i\neq j} C(\sC_{i,n}\|\sC_{j,n}) + 2\log d + 1\right).
\end{align}

The relation in Eq.~\eqref{BJI} is converted to 
\begin{align}
    - \log & P_{e,\min}(\{\sC_{i,n}\}_{i=1}^r)\notag \\
     & \geq -\log \left[
    10r^4(n+1)^2   (1+2^{\lambda_n})
    \right] +\min_{i\neq j}  C(\sC_{i,n}\|\sC_{j,n})\\
    & \geq -\log \left[
    10r^4(n+1)^2)\right] + \left(1-\frac{1}{n}\right) \min_{i\neq j} C(\sC_{i,n}\|\sC_{j,n}) - 2\log d - 1
   \end{align}
Dividing both sides by $n$ and taking the limit inferior $n\to \infty$, we have
\begin{align}
\liminf_{n\to \infty} - \frac{1}{n}\log P_{e,\min}(\{\sC_{i,n}\}_{i=1}^r)  
   \geq  & 
   \liminf_{n\to \infty}     \frac{1}{n}\min_{i\neq j}C(\sC_{i,n}\|\sC_{j,n}),\label{eq: liminf lower bound 1}
\end{align}
because the term $-\frac{1}{n}(\log \left[10r^4(n+1)^2 \right] + 2\log d + 1)$
vanishes in the limit $n\to \infty$.  

Let $n_k$ be a subsequence such that 
\begin{align}
    \liminf_{n\to \infty}   \frac{1}{n}\min_{i\neq j}C(\sC_{i,n}\|\sC_{j,n}) = \lim_{k\to \infty}     \frac{1}{n_k}\min_{i\neq j}C(\sC_{i,n_k}\|\sC_{j,n_k}). \label{eq: liminf lower bound 2}
\end{align}
Let $(i^*(n_k), j^*(n_k))$ be the pair of indices achieving the minimum in $ \min_{i\neq j}C(\sC_{i,n_k}\|\sC_{j,n_k})$ for each $n_k$. Since there are only finitely many pairs of indices in $\{1,\cdots,r\}\times \{1,\cdots, r\}$, there exists a pair of $(i', j')$ which is visited by $(i^*(n_k), j^*(n_k))$ finitely many times. Consider the subsequence $\{n_{k_l}: (i^*(n_{k_l}), j^*(n_{k_l})) = (i',j')\}$. Therefore, we have
\begin{align}
   \lim_{k\to \infty}     \frac{1}{n_k}\min_{i\neq j}C(\sC_{i,n_k}\|\sC_{j,n_k}) = & \lim_{k\to \infty} \frac{1}{n_k} C(\sC_{i^*(n_k), n_k}\|\sC_{j^*(n_k), n_k})\\
   =&  \limsup_{k\to \infty} \frac{1}{n_k} C(\sC_{i^*(n_k), n_k}\|\sC_{j^*(n_k), n_k})\\
   \geq & \limsup_{l\to \infty} \frac{1}{n_{k_l}} C(\sC_{i^*(n_{k_l}), n_{k_l}}\|\sC_{j^*(n_{k_l}), n_{k_l}})\\
   = & \limsup_{l\to \infty} \frac{1}{n_{k_l}} C(\sC_{i', n_{k_l}}\|\sC_{j', n_{k_l}})\\
\geq  & \liminf_{l\to \infty} \frac{1}{n_{k_l}} C(\sC_{i', n_{k_l}}\|\sC_{j', n_{k_l}})\\
\geq & \liminf_{n\to \infty} \frac{1}{n} C(\sC_{i', n}\|\sC_{j', n}).\\
   \geq & \min_{i\neq j}
   \liminf_{n\to \infty}     \frac{1}{n}C(\sC_{i,n}\|\sC_{j,n})\\
   = & \min_{i\neq j}
  \underline{C}^{\infty}(\sC_{i}\|\sC_{j}),\label{eq: liminf lower bound 3}
\end{align}
where the first equality follows from the definition of $(i^*(n_k), j^*(n_k))$, the second line follows as the limit exists, the third line follows by considering a subsequence, the fourth line follows from the definition of the subsequence $n_{k_l}$, the fifth line holds as the limit superior is no smaller than the limit inferior, the sixth line holds as we relax the limit inferior over a subsequence to the limit inferior over the whole sequence, the penultimate line follows as $(i',j')$ is a particular pair of indices and the last line follows from the definition of $\underline{C}^{\infty}(\sC_{i}\|\sC_{j})$.

Combining Eqs.~\eqref{eq: liminf lower bound 1},~\eqref{eq: liminf lower bound 2} and~\eqref{eq: liminf lower bound 3}, we have
\begin{align}
\liminf_{n\to \infty} - \frac{1}{n}\log P_{e,\min}(\{\sC_{i,n}\}_{i=1}^r)  
   \geq  & 
   \min_{i\neq j}
  \underline{C}^{\infty}(\sC_{i}\|\sC_{j}).
\end{align}
This completes the proof of the lower bound.

\subsection{Proof of upper bound}

The proof of the upper bound follows a similar approach to~\cite{nussbaum2010asymptotically}, where a test for multiple hypotheses is converted into a modified test for a binary hypothesis. This reduction allows us to apply Theorem~\ref{thm: chernoff bound for two sets} to bound the error exponent for each binary case.

Let $\{M_{i,n}\}_{i=1}^r$ be a quantum test for $\{\sC_{i,n}\}_{i=1}^r$. For any two fixed indices $1\leq i<j\leq r$, let $A_n,B_n \in \PSD$ such that $A_n + B_n = I - M_{i,n} - M_{j,n}$. Then we consider the modified test 
\begin{align}
M_{i,n}' = M_{i,n} + A_n, \quad M_{j,n}' = M_{j,n} + B_n,
\end{align}
for $\sC_{i,n}$ and $\sC_{j,n}$. We have 
\begin{align}
    P_{e}(\{M_{i,n}\}_{i=1}^r, & \{\sC_{i,n}\}_{i=1}^r)\\
    & = \sum_{i=1}^r \pi_i \sup_{\rho_{i,n} \in \sC_{i,n}} \tr[\rho_{i,n} (I - M_{i,n})]\\
    & \geq \pi_i \sup_{\rho_{i,n} \in \sC_{i,n}} \tr[\rho_{i,n} (I - M_{i,n})] + \pi_j \sup_{\rho_{j,n} \in \sC_{j,n}} \tr[\rho_{j,n} (I - M_{j,n})]\\
    & \geq \pi_i \sup_{\rho_{i,n} \in \sC_{i,n}} \tr[\rho_{i,n} (I - M'_{i,n})] + \pi_j \sup_{\rho_{j,n} \in \sC_{j,n}} \tr[\rho_{j,n} (I - M'_{j,n})]\\
    & \geq 2\min\{\pi_i,\pi_j\} P_{e}(\{M_{i,n}', M_{j,n}'\}, \{\sC_{i,n},\sC_{j,n}\})\\
    & \geq 2\min\{\pi_i,\pi_j\} P_{e,\min}(\{\sC_{i,n},\sC_{j,n}\}),
\end{align}
where the first inequality follows by retaining only the terms corresponding to indices $i$ and $j$ in the summation; the second inequality holds as $M_{i,n}' \geq M_{i,n}$ and $M_{j,n}' \geq M_{j,n}$; the third inequality follows by considering the binary test between $\sC_{i,n}$ and $\sC_{j,n}$ with prior probabilities $\{1/2, 1/2\}$; and the last inequality holds as $\{M_{i,n}', M_{j,n}'\}$ is a valid POVM for this binary hypotheses.

As this holds for any test $\{M_{i,n}\}_{i=1}^r$, we obtain
\begin{align}
    P_{e,\min}(\{\sC_{i,n}\}_{i=1}^r) \geq 2\min\{\pi_i,\pi_j\} P_{e,\min}(\{\sC_{i,n}, \sC_{j,n}\}).
\end{align}
Taking logarithms, dividing by $n$, and considering the limit superior, we find
\begin{align}
    \limsup_{n\to \infty} - \frac{1}{n} \log P_{e,\min}(\{\sC_{i,n}\}_{i=1}^r)
     \leq \limsup_{n\to \infty} - \frac{1}{n} \log P_{e,\min}(\{\sC_{i,n}, \sC_{j,n}\})
     \leq C^\infty(\sC_i\|\sC_j),
\end{align}
where the second inequality follows from the assumptions (C2) and (C3) and Theorem~\ref{thm: chernoff bound for two sets}.
Since this holds for all $1 \leq i < j \leq r$, we conclude that
\begin{align}
    \limsup_{n\to \infty} - \frac{1}{n} \log P_{e,\min}(\{\sC_{i,n}\}_{i=1}^r)
    \leq \min_{i\neq j} C^\infty(\sC_i\|\sC_j).
\end{align}
This completes the proof of the upper bound.

\section{Maximum overlap with free states in resource theory}\label{sec: maximum overlap with free states}

The maximum overlap between a pure state $\ket{\psi}$ and a set of free states $\sF$,
\begin{align}
O_{\sF}(\psi):= \sup_{\sigma \in \sF}  \bra{\psi} \sigma \ket{\psi},
\end{align}
is a technical quantity that appears frequently in quantum resource theory~\cite{fang2020no,fang2022no,liu2019one}. Here, we provide an operational interpretation of this quantity as the optimal error exponent in symmetric hypothesis testing. 
This connection justifies the maximum overlap as a meaningful resource quantifier. Furthermore, it provides another explicit example where the quantum Chernoff divergence between sets of quantum states is not additive, thereby illustrating the necessity of regularization for the quantum Chernoff divergence.

\begin{boxtheorem}\label{thm: maximum overlap with free states}
Let $\ket{\psi}\bra{\psi}$ be a pure state and $\sF \subseteq \density$ be a convex set of quantum states. Then
\begin{align}
    C(\ket{\psi}\bra{\psi}\|\sF) =  -\log O_{\sF}(\psi).
\end{align}
\end{boxtheorem}

\begin{proof}
    We have the following chain of equalities:
    \begin{align}
        C(\ket{\psi}\bra{\psi}\|\sF) & =  \inf_{\sigma \in \sF} C(\ket{\psi}\bra{\psi}\|\sigma)\\
        & = -\log \sup_{\sigma \in \sF} \inf_{\alpha \in [0,1]}  \tr[\ket{\psi}\bra{\psi}^{\alpha} \sigma^{1-\alpha}]\\
        & = -\log \sup_{\sigma \in \sF}   \inf_{\alpha \in [0,1]} \tr[\ket{\psi}\bra{\psi}\sigma^{1-\alpha}]\\
        & = -\log  \sup_{\sigma \in \sF} \tr[\ket{\psi}\bra{\psi}\sigma]\\
        & = -\log O_{\sF}(\psi),
    \end{align}
    where the first, second and last equalities follow from definitions, the fourth equality follows from the fact that $\sigma^\alpha \geq \sigma^\beta$ if $\alpha \leq \beta$. This concludes the proof.
\end{proof}

We now present explicit examples from several resource theories in which the maximum overlap with free states can be computed exactly. These values directly determine the optimal error exponent for quantum hypothesis testing between a pure state and a set of free states.

\paragraph{Resource theory of magic.} The maximum overlap with free states is a fundamental quantity in the resource theory of magic~\cite{bravyi2019simulation}. In this context, the set of free states $\sF$ is typically chosen as the set of stabilizer states on $n$ qubits, denoted $\STAB_n$, which is stable under tensor product. The maximum overlap $O_{\STAB}(\psi)$ is closely related to the stabilizer rank and stabilizer extent—key quantities in fault-tolerant quantum computation.\footnote{In~\cite{bravyi2019simulation}, the maximum overlap is defined with respect to pure stabilizer states. However, since the objective function is linear, maximizing over the convex hull (i.e., all stabilizer states) yields the same value.} For certain states, such as the magic T-state $\ket{T}:= (\ket{0} + e^{i\pi/4}\ket{1})/\sqrt{2}$, the maximum overlap can be computed explicitly~\cite{bravyi2019simulation}:
\begin{align}
    O_{\STAB}(\ket{T}\bra{T}^{\ox m}) = (4-2\sqrt{2})^{-m}.
\end{align}
This leads to the quantum Chernoff divergence,
\begin{align}
    C(\ket{T}\bra{T}\|\STAB_1) = C^\infty(\{\ket{T}\bra{T}^{\ox n}\}_{n\in \NN}\|\{\STAB_n\}_{n\in \NN}) = \log(4-2\sqrt{2}).
\end{align}
It is also known from~\cite[Section 6.2]{bravyi2019simulation} that there exist quantum states for which the maximum overlap with stabilizer states is \emph{not multiplicative}. Consequently, the quantum Chernoff divergence between two sets is not additive in general.

\paragraph{Resource theory of coherence.} In the resource theory of coherence, the set of free states is the set of incoherent states $\cI_n = \{\rho \in \density(\cH^{\ox n}): \rho = \text{diag}(\rho)\}$, i.e., states diagonal in a fixed basis~\cite{fang2018probabilistic,regula2018one,hayashi2021finite}. Let $\ket{\psi} = \sum_{i=1}^d a_i \ket{i}$ and $\ket{\psi}\bra{\psi} \in \density(\cH)$. Then,
\begin{align}
    O_{\cI_1}(\ket{\psi}\bra{\psi}) & = \max_{\sigma \in \cI_1} \bra{\psi} \sigma \ket{\psi}  = \max_{\sigma \in \cI_1} \sum_{i=1}^d |a_i|^2 \sigma_{i}
\end{align}
where $\sigma = \sum_{i=1}^d \sigma_{i} \ket{i}\bra{i}$, $\sigma_{i} \geq 0$, and $\sum_{i=1}^d \sigma_{i} = 1$. Note that the objective function is an average of the probability vector $(|a_1|^2, |a_2|^2, \ldots, |a_d|^2)$. So it is no greater than $\max_i |a_i|^2$. This value can be achieved by choosing $\sigma_{i_{\max}}=1$ for $i_{\max}=\arg\max_i |a_i|^2$ and $\sigma_i=0$ otherwise. Therefore, we have
\begin{align}
    O_{\cI_1}(\ket{\psi}\bra{\psi}) = \max_{i} |a_i|^2,
\end{align}
which is the infinity norm of the probability vector $(|a_1|^2, |a_2|^2, \ldots, |a_d|^2)$. This leads to the quantum Chernoff divergence
\begin{align}
    C(\ket{\psi}\bra{\psi}\|\cI_1) = C^\infty(\{\ket{\psi}\bra{\psi}^{\ox n}\}_{n\in \NN}\|\{\cI_n\}_{n\in \NN}) = -\log \max_{i} |a_i|^2.
\end{align}

More generally, the Petz \Renyi divergence of a general quantum state $\rho$ with respect to the set of incoherent states $D_{\Petz,\alpha}(\rho\|\cI_1)$ is additive and has a closed-form expression as~\cite{Chitambar_2016comparison}
\begin{align}
    D_{\Petz,\alpha}(\rho\|\cI_1) = \frac{\alpha}{\alpha-1} \log \tr\left[(\text{diag}(\rho^\alpha))^{1/\alpha}\right].
\end{align}
This implies that the Chernoff divergence is given by 
\begin{align}\label{eq: chernoff coherence}
    C(\rho\|\cI_1) = C^\infty(\{\rho^{\ox n}\}_{n \in \NN}\|\{\cI_n\}_{n\in \NN}) = \sup_{\alpha \in (0,1)} -\alpha \log \tr\left[(\text{diag}(\rho^\alpha))^{1/\alpha}\right].
\end{align}

\paragraph{Resource theory of entanglement.} In the resource theory of entanglement, the standard resource is the maximally entangled state $\ket{\Phi_m} := \frac{1}{\sqrt{m}} \sum_{i=1}^m \ket{ii}$~\cite{fang2019non,regula2019one}. The standard sets of free states are the separable states $\SEP$ and the positively partial transpose states $\PPT$, with the inclusion $\SEP_n(A^n:B^n) \subseteq \PPT_n(A^n:B^n)$. The maximum overlap of $\ket{\Phi_m}$ with these sets is given by~\cite[Lemma 2]{Rains1999bound}:
\begin{align}
    O_\SEP(\Phi_m) = O_\PPT(\Phi_m) = \frac{1}{m},
\end{align}
where the maximum is achieved, for example, by the product state $\ket{0}\bra{0}_A \otimes \ket{0}\bra{0}_B$. Consequently, the quantum Chernoff divergence is
\begin{align}
    C(\Phi_m\|\SEP_1) & = C^\infty(\{\Phi_m^{\otimes n}\}_{n\in \NN}\|\{\SEP_n\}_{n\in \NN}) = \log m,\\
    C(\Phi_m\|\PPT_1) & = C^\infty(\{\Phi_m^{\otimes n}\}_{n\in \NN}\|\{\PPT_n\}_{n\in \NN}) = \log m.
\end{align}

Due to the correspondence between a maximally correlated state $\rho_{\text{mc}} := \sum_{i,j} \rho_{ij} \ket{ii}\bra{jj}$ and a coherent state $\rho = \sum_{i,j} \rho_{ij} \ket{i}\bra{j}$~\cite[Corollary 1]{Zhu_2017}, we obtain an analog to Eq.~\eqref{eq: chernoff coherence}:
\begin{align}
    C(\rho_{\text{mc}}\|\SEP_1) & = C^\infty(\{\rho_{\text{mc}}^{\otimes n}\}_{n \in \NN} \|\{\SEP_n\}_{n\in \NN}) = \sup_{\alpha \in (0,1)} -\alpha \log \tr\left[(\text{diag}(\rho_{\text{mc}}^\alpha))^{1/\alpha}\right],\\
    C(\rho_{\text{mc}}\|\PPT_1) & = C^\infty(\{\rho_{\text{mc}}^{\otimes n}\}_{n \in \NN} \|\{\PPT_n\}_{n\in \NN}) = \sup_{\alpha \in (0,1)} -\alpha \log \tr\left[(\text{diag}(\rho_{\text{mc}}^\alpha))^{1/\alpha}\right].
\end{align}

\section{Discussion}
\label{sec: discussions}

We have established generalized quantum Chernoff bounds for the discrimination of multiple sets of quantum states, thereby extending the classical and quantum Chernoff bounds to the general setting of composite and correlated quantum hypotheses. Our main results show that the optimal asymptotic error exponent for discriminating between stable sequences of convex sets of quantum states is exactly given by the regularized Chernoff divergence between the sets. The minimal assumptions required ensure that our results are broadly applicable to a wide range of quantum information tasks. Furthermore, we have provided explicit constructions of the optimal measurement for binary composite hypotheses and given an operational interpretation of the maximum overlap with free states in quantum resource theories.

Several open questions and future directions remain. While the optimal exponent in the asymmetric (Stein's) setting can be efficiently computed despite the need for regularization~\cite{fang2025efficient}, it remains open whether efficient algorithms exist for computing the regularized Chernoff divergence in the symmetric setting given similar assumptions. As noted in Remark~\ref{rem: computability of Chernoff sets}, the Chernoff divergence can be efficiently computed for fixed $n$, but the rate of convergence of the regularized Chernoff divergence as $n$ increases is not well understood.  
Additionally, our construction of the optimal test for binary composite hypotheses assumes that the difference between the closest states is full rank. It would be interesting to determine whether this assumption can be relaxed, perhaps by appealing to continuity arguments or alternative analytical techniques.

Another direction concerns the equivalence between adaptive and nonadaptive strategies in adversarial quantum channel discrimination. It was shown in~\cite{fang2025adversarial} that, in the asymmetric hypothesis testing setting, adaptive and nonadaptive strategies yield the same optimal error exponent. Whether this equivalence persists in the symmetric setting remains an open question. In particular, based on the quantum Chernoff bound established in this work, this question reduces to whether the regularized Chernoff divergence between sequences of sets of quantum states generated by adaptive and nonadaptive strategies coincide.

\paragraph{Acknowledgements.} We thank Li Gao, Ke Li, Zhiwen Lin, Yupan Liu, Andre Milzarek, Milan Mosonyi, Ryuji Takagi and Junchi Yang for helpful discussions. K.F. is supported by the National Natural Science Foundation of China (Grant No. 92470113 and 12404569), the Shenzhen Science and Technology Program (Grant No. JCYJ20240813113519025), the Shenzhen Fundamental Research Program (Grant No. JCYJ20241202124023031), and the University Development Fund (Grant No. UDF01003565). M.H. is supported in part by the National Natural Science Foundation of China (Grant No. 62171212). Both authors are supported by the General R\&D Projects of 1+1+1 CUHK-CUHK(SZ)-GDST Joint Collaboration Fund (Grant No. GRDP2025-022).

\bibliographystyle{alpha_abbrv}
\bibliography{Bib}

\end{document}